\documentclass[11pt]{article}

\usepackage[final]{acl}

\usepackage{times}
\usepackage{latexsym}

\usepackage[T1]{fontenc}
\usepackage[utf8]{inputenc}

\usepackage{microtype}

\usepackage{inconsolata}
\usepackage{algorithm}
\usepackage{algorithmic}
\usepackage{subfigure}
\usepackage{graphicx}
\usepackage{makecell}
\usepackage{amsmath}
\usepackage{amssymb}
\usepackage{enumitem}
\usepackage{amsthm}
\usepackage{booktabs}
\theoremstyle{plain}
\newtheorem{theorem}{Theorem}
\newtheorem{proposition}[theorem]{Proposition}

\theoremstyle{definition}
\newtheorem{definition}[theorem]{Definition}

\theoremstyle{remark}
\newtheorem{remark}[theorem]{Remark}
\usepackage{multirow}
\usepackage{xr}
\usepackage[most]{tcolorbox}

\title{SharedRequest: Privacy-Preserving Model‑Agnostic Inference for Large Language Models}

\author{
 \textbf{Peihua Mai\textsuperscript{1,2,3}},
 \textbf{Xuanrong Gao\textsuperscript{3}},
 \textbf{Youlong Ding\textsuperscript{4}},
 \textbf{Xianglong Du\textsuperscript{5,6}},
\\
 \textbf{Wei Liu\textsuperscript{5,6}},
 \textbf{Yan Pang\textsuperscript{1,2,3}\thanks{Corresponding author: jamespang@nus.edu.sg}}
\\
 \textsuperscript{1}National University of Singapore (Chongqing) Research Institute,\\
 \textsuperscript{2}Chongqing Key Laboratory of Trusted Perception and Interaction Technology\\ for Intelligent and Connected Vehicles, Chongqing, China,\\
 \textsuperscript{3}National University of Singapore,
 \textsuperscript{4}Hebrew University of Jerusalem,\\
 \textsuperscript{5}State Key Laboratory of Intelligent Vehicle Safety Technology, Chongqing, China\\
 \textsuperscript{6}CHONGQING CHANGAN AUTOMOBILE Co., Ltd, Chongqing, China
}

\begin{document}
\maketitle
\begin{abstract}
With the widespread deployment of public large language models (LLMs) such as ChatGPT, protecting user prompt privacy has become an increasingly critical issue. Existing privacy-preserving inference methods sacrifice either utility or efficiency, and often require model-specific modifications that limit their compatibility. In this paper, we propose SharedRequest, a model-agnostic framework for privacy-preserving LLM inference that reformulates privacy protection at the batch level rather than the individual-prompt level. The key idea is to obscure sensitive information by mixing original prompts with noisy variants, while grouping semantically equivalent instructions to amortize the inference cost over a large batch of queries with minimal impact on LLM response quality. This design is independent of the LLM architecture, requiring no access to model parameters or architectural modification. Empirical results demonstrate that SharedRequest achieves over $20\%$ higher utility compared to prior differential privacy baselines, and its shared-prompt mechanism reduces query cost by up to $5\times$ compared to non‑batched inference.
\end{abstract}

\section{Introduction}
Rapid advancement of large language models (LLMs) has enabled transformative applications in healthcare, finance, and personal assistance \cite{chen2025improving, wang2025imagent,wu2025ensemble,li2025fakeradar}. State-of-the-art public LLMs, including ChatGPT, Claude, and Gemini, are primarily deployed on the cloud platform, which protects proprietary model architectures but raises significant privacy concerns. User prompts often contain sensitive information that should remain confidential from any third party, underscoring the urgent need for effective privacy-preserving inference frameworks.

Current paradigms for privacy-preserving LLM inference are constrained by two fundamental limitations:
\begin{itemize}[noitemsep,topsep=0.2pt]
    \item First, they suffer a trilemma between privacy, utility, and efficiency. Cryptographic methods such as secure multi-party computation (SMPC) \cite{hao2022iron, luo2024secformer} typically introduce substantial communication and computation overhead, limiting their practicality for large-scale LLM deployment. Local differential privacy (LDP) approaches \cite{du2023dp, mai2024split} introduce randomization on client side, which often significantly degrade utility and are often unsuitable for text generation.
    \item Second, most existing solutions require LLM-specific architectural modifications, imposing significant burdens on service providers. On the other hand, current model-agnostic methods perturb queries individually \cite{utpala2023locally,tong2025inferdpt, chen2023customized}, substantially distorting semantic meaning and compromising response quality.
\end{itemize}

We address these bottlenecks through two critical observations in modern LLM usage:
\begin{itemize}[noitemsep,topsep=0.2pt]
    \item \textbf{Batched query processing.} With massive user bases, commercialized LLMs typically process large batches of queries within short time windows. This creates opportunities to amortize inference costs across users. 
    \item \textbf{Sparse sensitive content.} Sensitive information within user prompts is often sparse, and not all tokens require protection. For example, in the query "Provide career suggestions for a person working in cybersecurity", only the word "cybersecurity" is sensitive.
\end{itemize}

These insights suggest a \emph{batch-level} formulation of privacy-preserving LLM inference, where \emph{privacy protection operates on sensitive prompt components and the additional cost is shared across the batch}.

Motivated by this idea, we propose SharedRequest\footnote{\url{https://github.com/NusIoraPrivacy/SharedRequest}},
% \footnote{\url{https://anonymous.4open.science/r/SharedRequest-B719}}
a privacy‑preserving LLM inference paradigm. Rather than perturbing each prompt \emph{independently}, our framework groups semantically equivalent requests for joint processing at the \emph{batch level}. It then samples noisy variants of sensitive content and mixes them with original prompts to obscure private information, while amortizing the additional inference cost related to noisy queries across a large batch of requests. We further design a lightweight cryptographic protocol to ensure secure communication among multiple parties. On three privacy-sensitive benchmarks, our SharedRequest consistently outperforms prior model-agnostic LDP baselines with over $20\%$ higher utility, while the shared-request mechanism reduces query cost by up to $5.6\times$ compared with non-batched inference.

Our contributions are summarized as follows:

(1) We propose a novel \emph{batch-level} paradigm for \emph{model-agnostic} private LLM inference, where semantically equivalent requests are grouped for joint processing, enabling both privacy protection through shuffled noisy variants and cost reduction through batch-level amortization.

(2) We design a lightweight multi-party protocol that preserves user anonymity and hides private attributes from the service provider, while requiring no modification to the LLM architecture or access to model parameters. 

(3) We formalize privacy through $(A_n, \epsilon)$-indistinguishability and provide theoretical analysis of both privacy protection and query-cost amortization.

(4) We introduce a third-party coordination mechanism for efficient noisy-query sampling, reducing user-side computation for prompts with multiple private attributes.

\section{Related Work}
\subsection{SMPC-based Private Transformer Inference}
SMPC enables multiple parties to conduct transformer model inference while ensuring that the server learns nothing about user inputs \cite{yao1982protocols}. Iron \cite{hao2022iron} introduces a hybrid cryptographic protocol tailored for matrix multiplications and complex non-linear operations, including Softmax, GELU, and LayerNorm. MPCFormer \cite{limpcformer} integrates MPC with knowledge distillation to approximate expensive functions cryptographically. SecFormer \cite{luo2024secformer} builds on this foundation by replacing nonlinear operations with optimized SMPC-friendly approximations and redesigning Softmax, GeLU, and LayerNorm. BOLT \cite{pang2024bolt} reduces the payload through cryptographic optimizations, minimizing homomorphic encryption rotations via a baby-step giant-step algorithm and employs word-elimination heuristics. NEXUS \cite{zhang2024secure} introduces the first non-interactive SMPC protocol for transformer inference based on RNS‑CKKS homomorphic encryption, reducing latency and interaction overhead compared to traditional multi-round protocols. Despite their strong privacy guarantees, these SMPC-based methods entail substantial computational and communication overhead, limiting their scalability in practical deployment.

\subsection{Differentially Private Prompt Engineering}
LDP mechanism has been explored to privatize user prompts before they are sent to cloud-based LLMs. Early methods convert tokens into embeddings, add manipulated noises, and map the embeddings back, effectively preserving differential privacy \cite{lyu2020differentially, lyu2020towards}. To achieve better utility, recent approaches differentiate sensitive from non-sensitive tokens and replace sensitive tokens using constrained adjacency lists rather than full vocabulary sampling \cite{tong2025inferdpt, chen2023customized, yue2021differential,li2024feddiv}. For example, CusText assigns smaller replacement sets per token to reduce semantic drift while maintaining privacy. Other approaches include paraphrase-based mechanisms such as DP-Prompt, which produce privatized outputs via temperature-controlled sampling during text rewriting \cite{utpala2023locally,mattern2022limits}. While LDP-based prompt privatization methods can operate with black‑box models, they generally face fundamental challenges in balancing trade-offs between utility and privacy.

\begin{table}[htp]
\begin{small}
% \scalebox{0.89}{
\begin{center}
\begin{tabular}{lccc}
\hline
 &  \makecell{Text \\Generation} & \makecell{Model-\\Agnostic} & \makecell{Utility-\\Preserving} \\
 \hline
Iron &  $\times$ & $\times$ & $\checkmark$\\
MPCFormer & $\times$ & $\times$ & $\checkmark$ \\
BOLT & $\times$ & $\times$ & $\checkmark$\\
NEXUS & $\times$ & $\times$ & $\checkmark$\\
 \hline
 RanText & $\checkmark$ & $\checkmark$ & $\times$\\
 CusText & $\checkmark$ & $\checkmark$ & $\times$\\
 DP-Prompt & $\checkmark$ & $\checkmark$ & $\times$\\
 InferDPT & $\checkmark$ & $\checkmark$ & $\times$\\
  \hline
DP-Forward & $\times$ & $\times$ & $\times$\\
SnD & $\times$ & $\times$ & $\times$\\
 \hline
 \textbf{SharedRequest} & $\checkmark$ & $\checkmark$ & $\checkmark$\\
  \hline
\end{tabular}
\caption{Comparison among transformer-based privacy-preserving inference frameworks.}
\label{tab:comparison}
\end{center}
% \vskip -0.2in
% }
\end{small}
\vskip -0.1in
\end{table}

\section{Background}
\subsection{Problem Formulation}
A service provider owns a proprietary generative language model $LM: \mathcal{V}^* \rightarrow \mathcal{V}^*$ such as ChatGPT and Gemini \cite{openai2023gpt4,hurst2024gpt,team2023gemini,chen2025mark,cai2025deepshield}, where $\mathcal{V}^*$ represents the space of all token sequences over the vocabulary $\mathcal{V}$. Given a user prompt $q$, our framework aims to protect its private attributes $A_q=\{a_1, a_2, ..., a_{|A(q)|}\}$ against the service provider.

The user's prompt $q$ consists of two parts: (i) the generic instruction $T_q \in \mathcal{T}$, and (ii) the personal attributes $A_q \in \mathcal{A}$. For example, given the prompt "What are the recommended restaurants in Britain?", the personal attribute is "Britain" and the generic instruction is "What are the recommended restaurants". For any generic instruction $T\in \mathcal{T}$, we define the plausible attribute set $\mathcal{A}(T)\subset \mathcal{A}$ as those $\mathcal{A}(T)$ that co-occurs with $T$ with non-zero probability (e.g., various locations compatible with that instruction). In the following, we use $A_q$ and $A(q)$ interchangeably to denote the private attributes of prompt $q$.

\subsection{A Simple Construction}
Instead of transmitting a single prompt to the cloud LLM, we propose sending a set of prompts, containing original and noisy prompts, to obscure the user's sensitive information. 

Given a prompt $q$ with private attributes $A_q$ and generic instruction $T_q$, the user samples a set of noisy attributes $\mathcal{A}'(T_q)$ from the plausible attribute set $\mathcal{A}(T_q)$. Each sampled attribute $A'(T_q)\in \mathcal{A}'(T_q)$ is then embedded into generic instruction $T_q$ to form noisy prompts. The noisy prompts, together with the original, are submitted to the server, and only the response to the true prompt $q$ is retained locally.

While providing accurate responses and preserving privacy, such construction introduces two practical challenges. Firstly, the query cost scales by a factor of $|\mathcal{A}'(T_q)|$ compared to a non-private scenario. Secondly, for prompts containing multiple private attributes, it becomes resource-intensive for the user to generate a series of plausible combinations. In the following, we propose a framework to mitigate the query cost and user computation overhead.

\section{Framework}

\subsection{Overview}

Our protocol involves grouping prompts and sampling noisy queries through a third party. We posit that the service provider receives a large volume of requests within each time window, making it highly likely that a certain number of prompts share the same generic instruction. For example, ChatGPT processes more than 1 billion queries every day, which translates to more than 11,500 queries per second \cite{Singh_About}. By grouping requests based on their generic instructions, we can amortize query costs and server computation time. 

As depicted in Figure \ref{fig:overview}, our framework consists of three parties:
\begin{figure*}[h]
  \centering
  \includegraphics[width=0.85\linewidth]{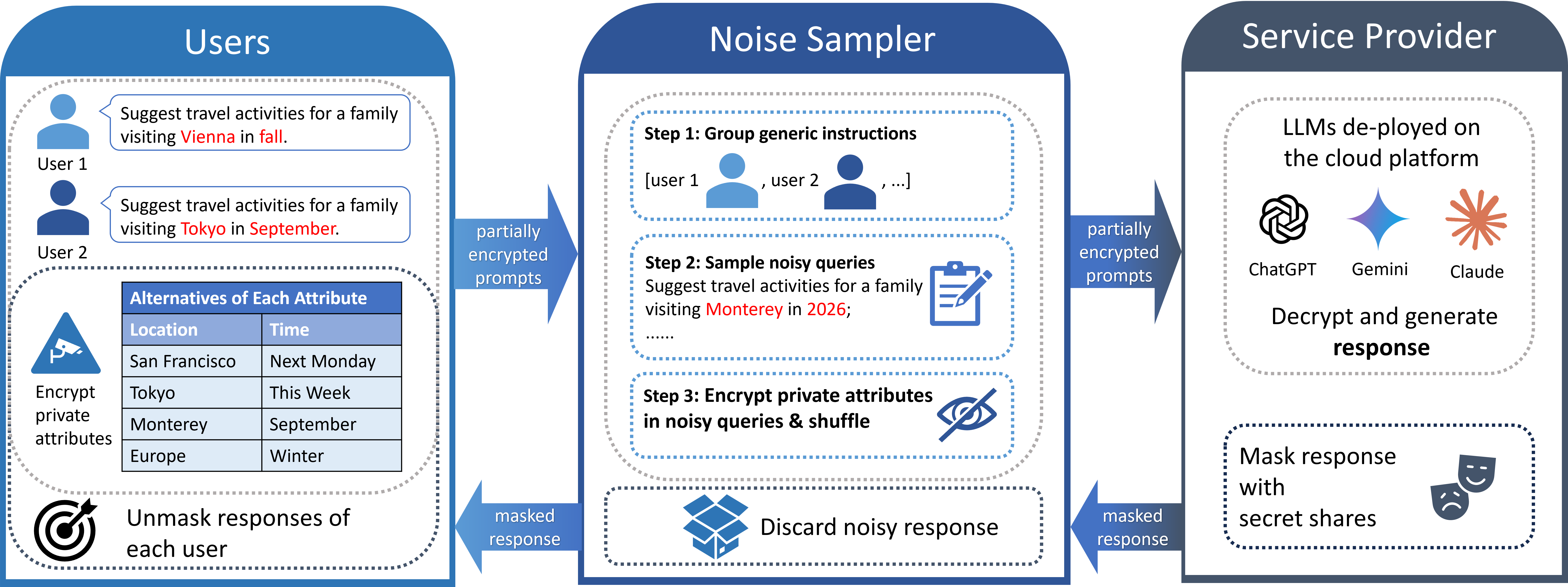}
  \caption{Overview of SharedRequest. The noise sampler groups user requests based on their generic instruction, samples noisy prompts with each group, and sends a shuffled mix to the server for inference. The inference costs related to noisy queries are amortized across users.}
  \label{fig:overview}
  \vskip -0.1in
\end{figure*}

\textbf{Users} hold their queries with private attributes to be kept hidden from the service provider. For each prompt $p$, they identify the private attributes $A_p$ and sample noisy attributes for each individual private attribute $a\in A_p$.

\textbf{Noise Sampler} receives queries with encrypted private attributes from the user. It clusters prompts with the same generic instruction and samples noisy queries to obfuscate original prompt.

\textbf{Service Provider} receives queries from the noise sampler and generates answers per query. The answers are masked with users' secrets and transmitted back to the noise sampler.

Appendix~\ref{app:algo} outlines the algorithm for our privacy-preserving inference framework.

\subsection{Cryptographic Design}
Our protocol aims to ensure that: (i) The noise sampler has no knowledge of any sensitive data and only observes the generic instruction. (ii) The service provider's view consists of an anonymous set of shuffled prompts, including genuine and noisy ones. A core design challenge lies in ensuring secure communication among users, noise sampler, and service provider, while also preserving the user’s anonymity from the service provider. 

\subsubsection{Forward Transmission}
The forward stage transmits of the message from the user to the service provider. During this phase, the user encrypts the private attributes using the service provider’s public key $pk_s$, and the service provider decrypts the ciphertext with its private key $sk_s$. This asymmetric encryption scheme hides the private attribute from any third parties, including the noise sampler. Furthermore, any messages sent to the noise sampler are encrypted with the sampler’s public key $pk_n$, preventing eavesdroppers (including the service provider) from learning their contents.

\subsubsection{Backward Transmission}
The backward stage transmits the response from the service provider to the user. The response may contain information related to private attributes, and thus it should be concealed from the noise sampler. This phase enables the user to decrypt the response without revealing their identity to the service provider. To accomplish this, we design a masking-based obfuscation scheme: (i) The user sends a seed $s$ encrypted with the service provider’s public key. (ii) The service provider decrypts the seed $s$ and uses a pseudorandom generator (PRG) to derive a pseudo-random mask $e=PRG(s)$, which is used to obfuscate the response via modular addition $r_s=r+e$. (iii) Upon receiving $r_s$, the user reconstructs the response via modular subtraction $r=r_s-e$.

\subsection{Noisy Query Sampling}

To amortize query cost, the noise sampler groups prompts with semantically equivalent generic instructions to share the same plausible attribute sets (see Appendix \ref{app:group}). When a query involves multiple private attributes, sampling plausible attribute combinations becomes significantly more complex. If there are $\mu$ private attributes, each with $k$ plausible alternatives, the total number of possible combinations grows to $k^\mu$. To address this combinatorial explosion, we introduce a coordinated approach between the user and the noise sampler, which consists of three components: (i) candidate specification for individual attributes by user; (ii) combination selection by noise sampler; and (iii) combination sampling by noise sampler (see Appendix \ref{app:dpsamp}).

\subsubsection{Candidates for Individual Attributes (User)}
Rather than sampling full attribute \textit{combinations}, the user independently selects alternative attributes for each \textit{single} attribute: $\{\mathcal{A}'_1, \mathcal{A}_2',...,\mathcal{A}'_{|A(q)|}\}$, where $|A(q)|$ is the number of attributes within prompt $q$ and $\mathcal{A}'_i$ is the set of alternatives for the $i^{th}$ attribute. These alternatives can be manually curated or drawn from a pre-constructed attribute-class database.

\subsubsection{Combination Selection (Noise Sampler)}
It would be prohibitively expensive to sample and send all combinations to the service provider, with payload exponential in the attribute size. To optimize the query cost, the noise sampler chooses combinations that look genuine within the context: (i) randomly chooses a set of candidate combinations from the users' individual attribute lists; (ii) employs a pre-trained discriminator to score each combination’s genuineness within context; (iii) selects only those combinations whose scores exceed a pre-specified threshold $\delta$, obtaining qualified combination set $\mathcal{A}^n$. 

To ensure adequate coverage, i.e., each of the $\mu$ attributes has all candidates with probability $\geq p$, the noise sampler draws $m$ combinations satisfying:
\begin{equation}
\begin{gathered}
    m \geq \left(\log (1-p) - \log (\mu k)\right) / \log (1-1/k),\\
    k = \max\{k_1, k_2, ..., k_{\mu}\},
\end{gathered} 
\end{equation}
where $k_i$ is the number of candidates for the $i^{th}$ attribute. The derivation is detailed in Appendix~\ref{app:sample}. In practice, we can set $m =\alpha \cdot \left(\log (1-p) - \log (\mu k)\right) / \log (1-1/k)$, where $\alpha$ is a tuning factor to balance privacy guarantees and system overhead.

% \section{Theoretical Analysis}
\section{Privacy Analysis}
\subsection{Threat Model}
We assume that both the service provider and noise sampler are curious-but-honest \cite{yang2019federated}. 
% The integrity and authenticity of transmitted message can be ensured by a UF-CMA secure signature scheme \cite{kaur2012digital, katz2010digital}. 
Furthermore, we assume that there is no collusion between the noise sampler and service provider. In Appendix \ref{app:shareplus}, we extend our framework using multi-server architecture and signature scheme \cite{kaur2012digital, katz2010digital} to withstand stronger adversary models, including malicious noise sampler and collusion between servers.
\subsection{Noise Sampler}
The noise sampler's view consists of: (i) each user's generic instruction $T_{q}$; (ii) each user's encrypted private attributes and seed; (iii) candidate alternatives for each individual attribute; and (iv) masked response for each user. The IND-CPA security of the encryption scheme guarantees that the sampler cannot infer any sensitive information from ciphertexts. Moreover, given that the seed used for masking is chosen uniformly at random, the security of the PRG ensures that the random mask $e$ and thus the obfuscated output $r_s$ are computationally indistinguishable from random numbers. Therefore, the noise sampler gains no information about the user prompt except its generic instruction and candidate alternatives.
\subsection{Service Provider}
The service provider's view is a shuffled set of prompts, including both genuine and noisy queries. To formulate the privacy protection, we introduce the notion of $(\mathcal{A}^n,\epsilon)$-indistinguishability. Appendix \ref{app:compprivacy} elaborates its post-processing and composition rules.

\begin{definition} [$\mathcal{A}^n$-Neighbors] 
Denote $A_q$ and $T_q$ as, respectively, the private attributes and generic instruction of user's prompt $q$. Let $\mathcal{A}^n (T_q)\subset \mathcal{A}$ be the set of qualified alternative attributes for $T_q$. Two sets of prompts, $Q'$ and $Q$, are $\mathcal{A}^n$-neighbors if $Q'$ can be obtained from $Q$ by 
removing the private attributes $A_q$ from one prompt $q$ in $Q$. We denote $Q'\in N_{\mathcal{A}^n} (Q)$.
% changing one prompt $q$ in $Q$ on its private attributes from $A_q$ to $A'_q\in \mathcal{A}^n (T_q)$. We denote $Q'\in N_{\mathcal{A}^n} (Q)$.
\end{definition}
\begin{remark}
    The private attributes $A_q$ can be removed from $q$ either by generalizing them into public information or by masking them directly.
\end{remark}

\begin{definition} [$(\mathcal{A}^n,\epsilon)$-indistinguishability] 
For any $\epsilon>0$, a randomized mechanism $M: \mathcal{Q}\rightarrow \mathcal{Y}$ preserves $(\mathcal{A}^n,\epsilon)$-indistinguishability if for $\forall Q,Q'\in N_{\mathcal{A}^n} (Q)$, the following inequality holds:
\begin{equation}
    \Pr[M(Q) \in S] \leq e^{\epsilon} \Pr[M(Q') \in S],
\end{equation}
for all subsets \(S\) of the output space $\mathcal{Y}$.
\end{definition}

\begin{theorem}
\label{thm:indist}
Denote $M: \mathcal{P} \rightarrow \mathcal{Y}$ as our protocol that maps user prompt batches to the shuffled output seen by the service provider. For any $\epsilon>0$, the mechanism $M$ achieves $(\mathcal{A}^n,\epsilon)$-indistinguishability.
\end{theorem}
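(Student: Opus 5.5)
The plan is to show that, conditioned on the generic instruction and on the users' candidate lists, the distribution of the server's view does not depend on which qualified attribute combination is the genuine one. Then the two probabilities in the definition of $(\mathcal{A}^n,\epsilon)$-indistinguishability coincide, and the inequality holds with ratio $1\le e^{\epsilon}$ for every $\epsilon>0$.

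First I would fix a batch $Q$ and a neighbor $Q'\in N_{\mathcal{A}^n}(Q)$ obtained by removing $A_q$ from a single prompt $q$. I then decompose the service provider's view $M(Q)$ into its per-group components. The sampler groups by generic instruction, and $T_q$ is unchanged in $Q'$, so every group other than the one containing $T_q$ receives identical inputs under $Q$ and $Q'$, and its component of the view has the same distribution. Since groups are processed with independent randomness, it then suffices to compare the output of the single group $G$ with instruction $T_q$.

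For $G$, I would describe the server's view as a uniformly shuffled multiset of prompts $\{T_q(a): a\in \mathcal{A}^n\}$. These prompts are deduplicated, because grouped users share the same plausible attribute sets. The set $\mathcal{A}^n$ is produced by the noise sampler from (i) the candidate lists $\mathcal{A}'_1,\dots,\mathcal{A}'_{|A(q)|}$, drawn from the attribute-class database and containing the true value, and (ii) the discriminator threshold $\delta$. The key claim is that this pipeline is a function of the candidate lists and of the sampler's own randomness, and not of which element is genuine. The sampler only sees ciphertexts of the true attributes and never uses them in selection. Once $A_q$ is masked or generalized, $q$ still contributes the same instruction and candidate lists, and the genuine combination already lies in the qualified set $\mathcal{A}^n(T_q)$. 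Hence the multiset of prompts and its uniform shuffle have identical distributions under $Q$ and $Q'$. Consequently $\Pr[M(Q)\in S]=\Pr[M(Q')\in S]$ for every measurable $S$, and the theorem follows. Finally, I would invoke the IND-CPA and PRG arguments already given for the noise sampler to argue that the encrypted seeds and masks add nothing distinguishing to the server's view, since they are independent of $A_q$.

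The main obstacle is making rigorous that the candidate lists, and hence $\mathcal{A}^n$, are independent of the genuine value. If the user's alternatives were chosen as a function of $a_i$ (for example, ``nearest'' alternatives), the sets themselves would leak information. I would handle this by assuming, as the construction suggests, that each $\mathcal{A}'_i$ is the full attribute class or is sampled from a distribution that is invariant under permutation of its members. I would also assume that the coverage guarantee from the choice of $m$ ensures the genuine combination is present in $\mathcal{A}^n$. Where coverage holds only with probability $\ge p$, I would state explicitly that the equality is conditional on that coverage event.
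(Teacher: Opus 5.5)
Your proposal has a genuine gap: the service provider's view is not the deduplicated set $\{T_q(a): a\in\mathcal{A}^n\}$. Every message it receives---each genuine $Enc_s^i$ and each injected noisy query---carries its own seed. Each must be answered with its own masked response $r+PRG(s)$. So the server necessarily learns how many copies of each attribute combination arrived, and grouping identical prompts is an internal step it performs only after seeing these multiplicities.

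Your independence observation holds only for the \emph{injected} queries, since the sampler never sees $A_q$. The view also contains $q$'s own message, which adds one copy of $A_q$. Take a group with a single user: under $Q$ the combination $A_q$ appears at least once with certainty, while under $Q'$ it appears only if the sampler happened to inject it. So $\Pr[M(Q)\in S]=\Pr[M(Q')\in S]$ fails. A telltale sign is that your argument never uses $\epsilon$. It would equally ``prove'' the theorem for a sampler that injects each qualified combination exactly once, which Appendix~\ref{app:dpsamp} identifies as insecure precisely because of such frequency patterns.

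Your fallback assumptions also do not hold as stated. The bound on $m$ only ensures that every individual candidate value occurs in some sampled combination with probability at least $p$. It does not ensure that the specific genuine combination is sampled and passes the threshold $\delta$. In any case, conditioning on a probability-$p$ event would leave an additive slack of $1-p$, not the pure inequality in the definition.

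The missing idea is the randomized multiplicity $\mathcal{N}[v]=\lfloor\lambda_v\rfloor$, with $\lambda_v$ drawn from the one-sided exponential density of rate $\epsilon/2$. The paper's proof proceeds in three steps:
\begin{enumerate}
\item It observes that changing $A_q$ leaves the grouping unchanged and only alters the histogram $H$ of attribute-combination counts, by at most $2$ in $\ell_1$ norm.
\item It bounds the density ratio of the noised histogram by $\exp\bigl(\frac{\epsilon}{2}\|f(Q)-f(Q')\|_1\bigr)\le e^{\epsilon}$, as for a Laplace mechanism.
\item It treats the floor (and implicitly the shuffle and the re-embedding into the instruction) as post-processing.
\end{enumerate}

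If you take this route, track supports. Because $\lfloor\lambda_v\rfloor\ge 0$, the observed count of $v$ never falls below its genuine count. The bound therefore goes through, with factor $e^{\epsilon/2}$, when passing to $Q'$ only lowers one count by one. But if it also raises the count of some $w$, then the outcome in which $w$ receives no injected copy has positive probability under $Q$ and zero under $Q'$. The paper's displayed ratio does not address this case.
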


This theorem ensures that the service provider’s view remains indistinguishable with regards to the change in private attributes.

\subsubsection{Connection to Differential Privacy} In the following, we show that the definition of $(\mathcal{A}^n,\epsilon)$-indistinguishability is a customization and relaxation of differential privacy (DP) \cite{dwork2006differential}.

\begin{definition} [Differential Privacy] 
Given $\epsilon>0$ and $\delta \in [0,1)$, a randomized mechanism preserves \((\epsilon, \delta)\)-differential privacy if and only if, for any neighboring sets of prompts \(Q, Q' \in \mathcal{Q}\), the following inequality holds:
\[
\Pr[M(Q) \in S] \leq e^{\epsilon} \Pr[M(Q') \in S] + \delta
\]
for all subsets \(S\) of the output space $\mathcal{Y}$.
\end{definition}
In standard DP, two datasets $Q'$ and $Q$ are considered neighbors if they differ by exactly one prompt, i.e., one query $q\in Q$ is replaced to any other prompt $q'\in Q'$. We can obtain $(\mathcal{A}^n,\epsilon)$-indistinguishability from standard DP by setting the constraints: (1) only change $q$'s private attribute $A_q$, and (2) private attributes are replaced by the prompt's qualified alternatives $\mathcal{A}^n (T_q)$. Conversely, standard DP can be obtained from $(\mathcal{A}^n,\epsilon)$-indistinguishability by treating $A_q$ as the full prompt $q$, and allowing $\mathcal{A}^n (T_q)$ to include all possible text values within the context length. Therefore, $(\mathcal{A}^n,\epsilon)$-indistinguishability can be considered as a user-defined variant of DP, with qualified alternatives and thus neighbors co-specified by users and noise sampler.

\section{Complexity Analysis}
The computation cost of our protocol is primarily determined by: (i) noise sampler's grouping and attribute combination filtering cost; (ii) optional user-side query simplification cost; (iii) service provider's LLM inference cost; (iv) encryption and decryption operations. Since (iv) is negligible relative to (i)-(iii) according to our empirical analysis, we focus on the costs related to the former two components.
\subsection{Noise Sampler Computation Cost} 

The primary computational overhead for the noise sampler arises from batch grouping and attribute filtering. The batch grouping involves: (i) encoding each generic instruction into embedding; and (ii) clustering queries based on the embeddings. The first part is fully parallelizable across instructions, and the second part depends on the chosen clustering algorithm \cite{ester1996density,sumengen2021scaling,pelleg1999accelerating}. Modern clustering algorithms, such as Reciprocal Agglomerative Clustering (RAC), enable parallel processing, scaling efficiently to billions of points. Additionally, combination filtering, which evaluates plausible attribute combinations, costs $O\left(\log(\mu k)/\log (1-1/k)\right)$ per generic instruction. In practice, this can also be parallelized across samples, further reducing running time.

\subsection{User Simplification Cost}
Prompt simplification is performed optionally and can be executed manually or via automated methods. For complexity analysis, we assume a local model deployed by the user. Consequently, the computational cost depends directly on the model’s architecture and the lengths of the original and simplified prompts.

\subsection{Query Cost from Service Provider} 

We consider query cost as the token-based charge from commercial LLM APIs, mostly associated with service provider's LLM inference cost. To mitigate this cost, the server aggregates identical prompts and generates the response once per group rather than handling each individually. The following theorem shows how query cost depends on the distribution of generic instructions.
\begin{theorem}
\label{thm:complexity}
Suppose a batch contains $B$ genuine queries, each drawn independently from an underlying distribution. Denote $c_i$ and $p_i$ as, respectively, the average query cost and sampling probability for generic instruction $T_i, i\in [|\mathcal{T}|]$. Then the expected per-query cost $C$ is:
\begin{equation}
    E(C) = \frac{1}{B}\sum_{i=1}^{|\mathcal{T}|} \left(1-(1-p_i)^B\right) c_i . 
\end{equation}
If $c_i=c$ is equal for all generic instructions $T_i, i\in [|\mathcal{T}|]$, then $E(C)$ increases with the entropy of the instruction distribution $H(T)$, and the cost maximizes at uniform distribution:
\begin{equation}
    \lim_{|\mathcal{T}|\rightarrow\infty} E(C) = \frac{|\mathcal{T}|}{B} (1-e^{-B/|\mathcal{T}|}) c.
\end{equation}
\end{theorem}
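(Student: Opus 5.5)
The plan is to start from indicator variables. For each generic instruction $T_i$, $i\in[|\mathcal{T}|]$, let $X_i=\mathbb{1}\{\text{at least one of the } B \text{ genuine queries has instruction } T_i\}$. Because the server aggregates identical prompts and answers each group once, the total batch cost is $\sum_i X_i c_i$, and the per-query cost is $C=\frac{1}{B}\sum_i X_i c_i$. Here we treat the amortized noisy-query cost of a group as already absorbed into the group's average cost $c_i$. The $B$ queries are drawn i.i.d., so $\Pr[X_i=0]=(1-p_i)^B$. Linearity of expectation then gives $E(C)=\frac{1}{B}\sum_i\bigl(1-(1-p_i)^B\bigr)c_i$ directly, with no independence needed between the $X_i$.

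For the equal-cost case, write $E(C)=\frac{c}{B}\sum_i f(p_i)$ with $f(p)=1-(1-p)^B$. The function $f$ is concave on $[0,1]$, since $f''(p)=-B(B-1)(1-p)^{B-2}\le 0$.

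\begin{itemize}
\item \emph{Maximum at the uniform distribution.} On the simplex $\sum_i p_i=1$, Jensen's inequality (equivalently, Schur-concavity of the symmetric concave sum $\sum_i f(p_i)$) shows the sum is maximized at the uniform distribution $p_i=1/|\mathcal{T}|$.
\item \emph{Increase with entropy.} We state this in the majorization sense. If $p$ majorizes $p'$, then $H(p)\le H(p')$ and also $\sum f(p_i)\le\sum f(p'_i)$, because both $H$ and $\sum f$ are Schur-concave. So "more spread out" distributions have both higher entropy and higher expected cost. This ordering is only partial, so we would phrase the claim as monotonicity along majorization chains, for example mixtures $p_\lambda=(1-\lambda)p+\lambda u$ toward the uniform $u$. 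Along such a path both $H(p_\lambda)$ and $E(C)$ increase in $\lambda$, by concavity of each.
\end{itemize}

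This entropy step is the main obstacle. A fully general statement that $E(C)$ is a monotone function of $H(T)$ alone is false, since two distributions with equal entropy can give different costs. The proof therefore has to commit to the majorization interpretation and say so explicitly.

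Finally, at the uniform distribution, $E(C)=\frac{|\mathcal{T}|}{B}\bigl(1-(1-1/|\mathcal{T}|)^B\bigr)c$. Writing $(1-1/|\mathcal{T}|)^B=\exp\bigl(B\log(1-1/|\mathcal{T}|)\bigr)$ and using $\log(1-x)=-x+O(x^2)$, we get $(1-1/|\mathcal{T}|)^B=e^{-B/|\mathcal{T}|}\bigl(1+O(B/|\mathcal{T}|^2)\bigr)$. This gives the stated asymptotic expression $\frac{|\mathcal{T}|}{B}(1-e^{-B/|\mathcal{T}|})c$ as $|\mathcal{T}|\to\infty$. This is the standard Poisson approximation, where the ratio $B/|\mathcal{T}|$ is the quantity kept in the formula, so the limit is understood as an asymptotic equivalence rather than a literal limit with $B$ fixed.
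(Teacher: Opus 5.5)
Your first step (indicators $X_i$ and linearity of expectation) is the same as the paper's. For the equal-cost claims you take a genuinely different route.

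The paper perturbs one pair, $p_i\mapsto p_i+\delta$ and $p_j\mapsto p_j-\delta$, and differentiates $E(C)$ and $H(T)$ in $\delta$. It notes that both increase for $\delta<(p_j-p_i)/2$ and decrease beyond that point. It then asserts that any change $\mathbf{p}\to\mathbf{p}'$, written as $|\mathcal{T}|-1$ such two-coordinate transfers, moves $E(C)$ and $H(T)$ in the same direction. Maximality at the uniform distribution is deduced from this co-monotonicity, and the final limit is simply ``taken''.

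The per-step fact is correct, but the composition is not. The signs agree step by step while the magnitudes differ, so the net changes can have opposite signs. Your observation that the literal claim is false is right. For instance, with $B=2$ one has $E(C)=\frac{c}{2}\bigl(2-\sum_i p_i^2\bigr)$. The distribution $(0.7,0.15,0.15)$ has larger Shannon entropy than $(0.5,0.5,0)$ (about $0.82$ vs.\ $0.69$ nats) but larger $\sum_i p_i^2$ ($0.535$ vs.\ $0.5$), hence lower expected cost.

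Your concavity/Schur-concavity formulation is therefore the right repair, and it buys three things:
\begin{itemize}
\item It is exactly what the paper's transfer argument proves when every step is a transfer toward equality. By Hardy--Littlewood--P\'olya, $\mathbf{p}'\prec\mathbf{p}$ holds precisely when $\mathbf{p}'$ is reachable from $\mathbf{p}$ by such steps.
\item It gets the maximum at the uniform distribution directly from Jensen, rather than from the unproven entropy monotonicity.
\item Your Poisson-approximation reading of the final display, whose right-hand side still contains $|\mathcal{T}|$, makes precise what the paper leaves informal.
\end{itemize}
The paper's pairwise computation is more elementary and shows the local mechanism: both quantities share the same balancing point for each pair. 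Yours gives a correctly stated result with a complete proof.

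One small point: along $p_\lambda$, monotonicity follows from concavity in $\lambda$ together with maximality at $\lambda=1$, not from concavity alone.
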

\begin{remark}
Here, $p_i$ refers to the real-world occurrence probability of generic instruction $T_i, i\in [|\mathcal{T}|]$, not related to the sampling mechanism of noisy queries. Semantically equivalent generic instructions are treated as the same $T_i$.
\end{remark}
Similarly, the cost reduction compared with non-batched scenario is associated with the entropy of instruction distribution (see Appendix \ref{app:querycost}). The above theorem implies that our sharing strategy is especially cost-effective when the query distribution is uneven, such as in long-tail scenarios. Empirical studies have observed such non-uniform distributions in real-world LLM usage \cite{kelly2025investigating}.

\section{Experiment}
\subsection{Datasets and Setting}
Our method is tested on three privacy-sensitive datasets: (i) Legal-QA that contains professional question–answer pairs covering real-world legal scenarios \cite{dzunggg_legal_qa_v1}; (ii) Medical-QA with medical problems \cite{chen2024medical}; and (iii) MMLU-Biz that extracts the business-related questions from MMLU dataset \cite{hendryckstest2021}. 
% We assess utility using the F1 score on MMLU‑Biz and employ quality score rated by GPT‑4o for both Legal‑QA and Medical‑QA.

We evaluate two variants of our framework: (i) original: all users transmit the raw prompt, and (ii) simplified: all users apply prompt simplification to improve efficiency. In practice, mixed deployments across users will likely yield utility and inference overhead between these two extremes.

\subsection{Utility Analysis}
We compare the utility of our algorithm with: (1) non-private setting where users issue prompts directly to the LLM, and (2) three model-agnostic private inference approaches (see Table \ref{tab:comparison}), including: 
% (i) non-private scenario where users issue prompts directly to the LLM without privacy safeguards; 
(i) RanText \cite{tong2025inferdpt} that introduces random adjacency list for text perturbation; (ii) CusText \cite{chen2023customized} that provides relaxed DP guarantees by customizing each input token’s replacement set; (iii) DP-Prompt \cite{utpala2023locally} that applies DP sampling during paraphrase generation via temperature-controlled decoding; (iv) CusText+ that is adapted to provide similar level of relaxed DP of SharedRequest; (v) InferDPT \cite{tong2025inferdpt} that consists of a DP-based perturbation module and an extraction module for denoising. 

To provide similar privacy guarantee with the relaxed DP version of SharedRequest, we adapt CusText+ to the aligned relaxed DP version by restricting its replacement space to the same candidate list $A_n$ and sampling replacements via an exponential mechanism. For InferDPT, we keep its extraction module unchanged, and replace its perturbation module with CusText+'s $A_n$-restricted sampling.

As shown in Table \ref{tab:utilitynp}, the original version achieves nearly identical utility to the non-private baseline (within $1\%$), where any minor loss stems from instruction clustering and is negligible. The simplified version incurs an average utility loss of approximately $4.9\%$ relative to the non-private setting.
\begin{table}[htp]

\begin{small}
\begin{center}
\scalebox{0.88}{
\begin{tabular}{lccc}
\hline
 & MMLU-Biz & Medical-QA & Legal-QA \\
 \hline
& \multicolumn{3}{c}{GPT-3.5-Turbo}\\
\cmidrule(l){2-4} 
Non-private & $\mathbf{0.671}$\tiny$\mathbf{\pm0.003}$ & $6.89$\tiny$\pm0.01$ & $7.31$\tiny$\pm0.01$ \\
Ours (Original) & $0.665$\tiny$\pm0.002$ &$\mathbf{6.91}$\tiny$\mathbf{\pm0.01}$ & $\mathbf{7.32}$\tiny$\mathbf{\pm0.02}$\\
Ours (Simplified) & $0.638$\tiny$\pm0.002$ & $6.32$\tiny$\pm0.04$ & $7.02$\tiny$\pm0.05$ \\
 \hline
& \multicolumn{3}{c}{GPT-4o-mini}\\
\cmidrule(l){2-4} 
Non-private & $\mathbf{0.853}$\tiny$\mathbf{\pm0.001}$ & $\mathbf{8.60}$\tiny$\mathbf{\pm0.01}$ & $\mathbf{8.69}$\tiny$\mathbf{\pm0.03}$ \\
Ours (Original) & $0.851$\tiny$\pm0.004$ & $8.58$\tiny$\pm0.00$ & $8.63$\tiny$\pm0.02$\\
Ours (Simplified) & $0.817$\tiny$\pm0.001$ & $8.28$\tiny$\pm0.03$ & $8.27$\tiny$\pm0.04$ \\
 \hline
& \multicolumn{3}{c}{GPT-4o}\\
\cmidrule(l){2-4} 
Non-private & $0.899$\tiny$\pm0.001$ & $\mathbf{8.81}$\tiny$\mathbf{\pm0.03}$ & $\mathbf{8.81}$\tiny$\mathbf{\pm0.01}$\\
Ours (Original) & $\mathbf{0.900}$\tiny$\mathbf{\pm0.002}$ & $8.74$\tiny$\pm0.04$ & $8.79$\tiny$\pm0.01$\\
Ours (Simplified) & $0.848$\tiny$\pm0.003$ & $8.40$\tiny$\pm0.02$ & $8.46$\tiny$\pm0.06$ \\
 \hline
\end{tabular}
}
\end{center}
\end{small}
\caption{Utility comparison of non‑private setting and our approach. We report F1 score for MMLU-Biz, and quality score rated by GPT-4o on a 1–10 scale for Medical‑QA and Legal‑QA.}
\label{tab:utilitynp}
\vskip -0.1in
\end{table}

Table \ref{tab:utilitydp} compares the utility of our simplified variant against DP baselines under privacy budget from $1$ to $5$. It can be observed that our method consistently outperforms the baselines. At $\epsilon=1$, our simplified version achieves $2.2\times$, $1.7\times$, $1.7\times$, $1.2\times$, and $1.2\times$ improved utility on average than RanText, CusText, DP-Prompt, CusText+ and InferDPT methods, respectively.

\begin{table}[htp]
\begin{small}
\begin{center}
\begin{tabular}{lccc}
\hline
Method & $\epsilon$ & GPT-4o-mini & GPT-4o\\
\hline
% \multicolumn{4}{l}{Standard DP Perturbation}\\
% \hline
\multirow{3}{*}{\shortstack[l]{RanText\\(Standard DP)}} & $1$ & $0.381$\tiny$\pm0.002$ & $0.390$\tiny$\pm0.006$\\
& $3$ & $0.388$\tiny$\pm0.006$ & $0.405$\tiny$\pm0.008$\\
& $5$ & $0.411$\tiny$\pm0.005$ & $0.435$\tiny$\pm0.006$\\
 \hline
\multirow{3}{*}{\shortstack[l]{CusText\\(Standard DP)}}&  $1$ & $0.511$\tiny$\pm0.008$ & $0.473$\tiny$\pm0.005$ \\
& $3$ & $0.553$\tiny$\pm0.002$ & $0.603$\tiny$\pm0.003$ \\
& $5$ & $0.631$\tiny$\pm0.006$ & $0.649$\tiny$\pm0.005$ \\
  \hline
\multirow{3}{*}{\shortstack[l]{DP-Prompt\\(Standard DP)}}& $1$ & $0.497$\tiny$\pm0.010$ & $0.496$\tiny$\pm0.008$\\
& $3$  & $0.522$\tiny$\pm0.009$ & $0.533$\tiny$\pm0.006$\\
& $5$  & $0.529$\tiny$\pm0.003$ & $0.542$\tiny$\pm0.003$\\
 \hline
\multirow{3}{*}{\shortstack[l]{CusText+\\(Relaxed DP)}}& $1$ & $0.686$\tiny$\pm0.002$ & $0.694$\tiny$\pm0.003$ \\
& $3$ & $0.695$\tiny$\pm0.004$ & $0.758$\tiny$\pm0.006$  \\
& $5$ & $0.743$\tiny$\pm0.005$ & $0.754$\tiny$\pm0.005$ \\
 \hline
 \multirow{3}{*}{\shortstack[l]{InferDPT\\(Relaxed DP)}}& $1$ & $0.700$\tiny$\pm0.006$ & $0.712$\tiny$\pm0.008$ \\
& $3$ & $0.708$\tiny$\pm0.003$ & $0.758$\tiny$\pm0.006$  \\
& $5$ & $0.746$\tiny$\pm0.001$ & $0.763$\tiny$\pm0.003$ \\
 \hline
%   \multicolumn{4}{l}{SharedRequest}\\
% \hline
\textbf{Ours (Simplified)} & $1$ & $\mathbf{0.817}$\tiny$\mathbf{\pm0.001}$ & $\mathbf{0.848}$\tiny$\mathbf{\pm0.003}$\\
 \hline
\end{tabular}
\end{center}
\end{small}
\caption{F1 score of our approach (with prompt simplification across all users) and DP baselines across privacy budget $\epsilon=1$ to $5$ on MMLU-Biz.}
\label{tab:utilitydp}
\vskip -0.1in
\end{table}

% \begin{table}[htp]
% \begin{small}
% \begin{center}
% \begin{tabular}{lcc}
% \hline
% Method & GPT-4o-mini & GPT-4o\\
% \hline
% % \multicolumn{4}{l}{Standard DP Perturbation}\\
% % \hline
% RanText & $0.381$\tiny$\pm0.002$ & $0.390$\tiny$\pm0.006$ \\
%  \hline
% CusText&  $0.511$\tiny$\pm0.008$ & $0.473$\tiny$\pm0.005$ \\
%   \hline
% DP-Prompt & $0.497$\tiny$\pm0.010$ & $0.496$\tiny$\pm0.008$\\
%  \hline
% CusText+ & $0.686$\tiny$\pm0.002$ & $0.694$\tiny$\pm0.003$ \\
%  \hline
% InferDPT & $0.700$\tiny$\pm0.006$ & $0.712$\tiny$\pm0.008$ \\
%  \hline
% %   \multicolumn{4}{l}{SharedRequest}\\
% % \hline
% \textbf{SharedRequest} & $\mathbf{0.817}$\tiny$\mathbf{\pm0.001}$ & $\mathbf{0.848}$\tiny$\mathbf{\pm0.003}$\\
%  \hline
% \end{tabular}
% \end{center}
% \end{small}
% \caption{F1 score of our approach (with prompt simplification across all users) and DP baselines across privacy budget $\epsilon=1$ to $5$ on MMLU-Biz.}
% \label{tab:utilitydp}
% \vskip -0.1in
% \end{table}

\subsection{Attack Results}
While the service provider only observes a shuffled set of prompts, it is crucial that the original queries remain indistinguishable from the injected noisy prompts. To assess the indistinguishability, we simulate the following inference attack: service provider utilizes a pre-trained classifier to label each prompt as genuine or noisy, aiming to distinguish true user queries. Based on the classification result, the server can identify the original prompts.

In Figure \ref{fig:pjdatk}, we perform experiments using two Qwen-based discriminators, i.e., Qwen2.5-0.5B and Qwen2.5-1.5B \cite{qwen2025qwen25technicalreport}, to generate candidate combinations. For each discriminator, we train four attack classifiers of varying model sizes \cite{liu2019roberta, qwen2025qwen25technicalreport, dubey2024llama} to distinguish genuine prompts from noisy variants. Depending on the sampled candidate set size $m$, the number of plausible combinations ranges from $8.3$ to $52.4$, and each attribute has over $7$ alternatives on average for $\alpha\geq 10$. Among the qualified attributes, the F1 score of the attack is within $63\%$ for Qwen2.5‑0.5B and $58\%$ for Qwen2.5‑1.5B discriminators. In contrast, when the server randomly samples fake prompts without filtration, the attack success rate is much higher, reaching around $80\%$. Our combination-filtering mechanism reduces attack success by an average of approximately $32.7\%$, substantially improving robustness against server inference attacks. 

\begin{figure}[htp]
    \centering
    \includegraphics[width=0.98\linewidth]{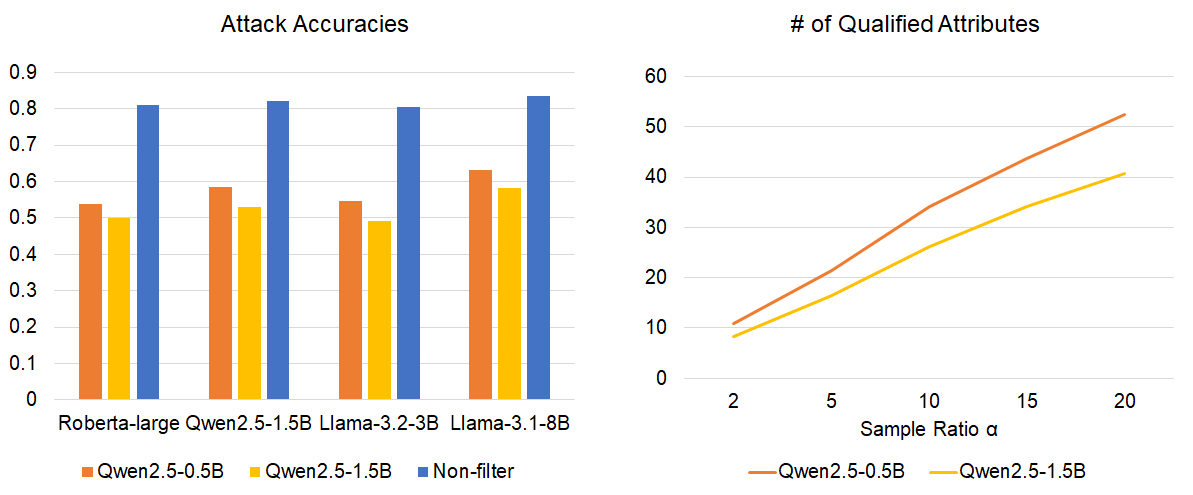}
    \caption{AUC of attack accuracies and size of qualified attributes on MMLU-Biz dataset using Qwen2.5-0.5B and Qwen2.5-1.5B discriminators. In the “non‑filter” scenario, the server randomly samples fake prompts without applying combination filtering.}
    \label{fig:pjdatk}
\vskip -0.1in
\end{figure}

In Figure \ref{fig:atccomp}, we evaluate empirical privacy under attribute inference attack, where uses GPT-5.2 to infer whether a sensitive attribute is owned by the user on MMLU-Biz dataset. It can be observed that SharedRequest achieves comparable attack ASR relative to DP-Prompt, CusText+, and InferDPT, while maintaining substantially higher utility. While SharedRequest leads to higher attack ASR for RanText and CusText by up to $20\%$ and $10\%$, respectively, the F1 score is higher than the two baselines by at over $80\%$.

\begin{figure}[htp]
    \centering
    \includegraphics[width=0.98\linewidth]{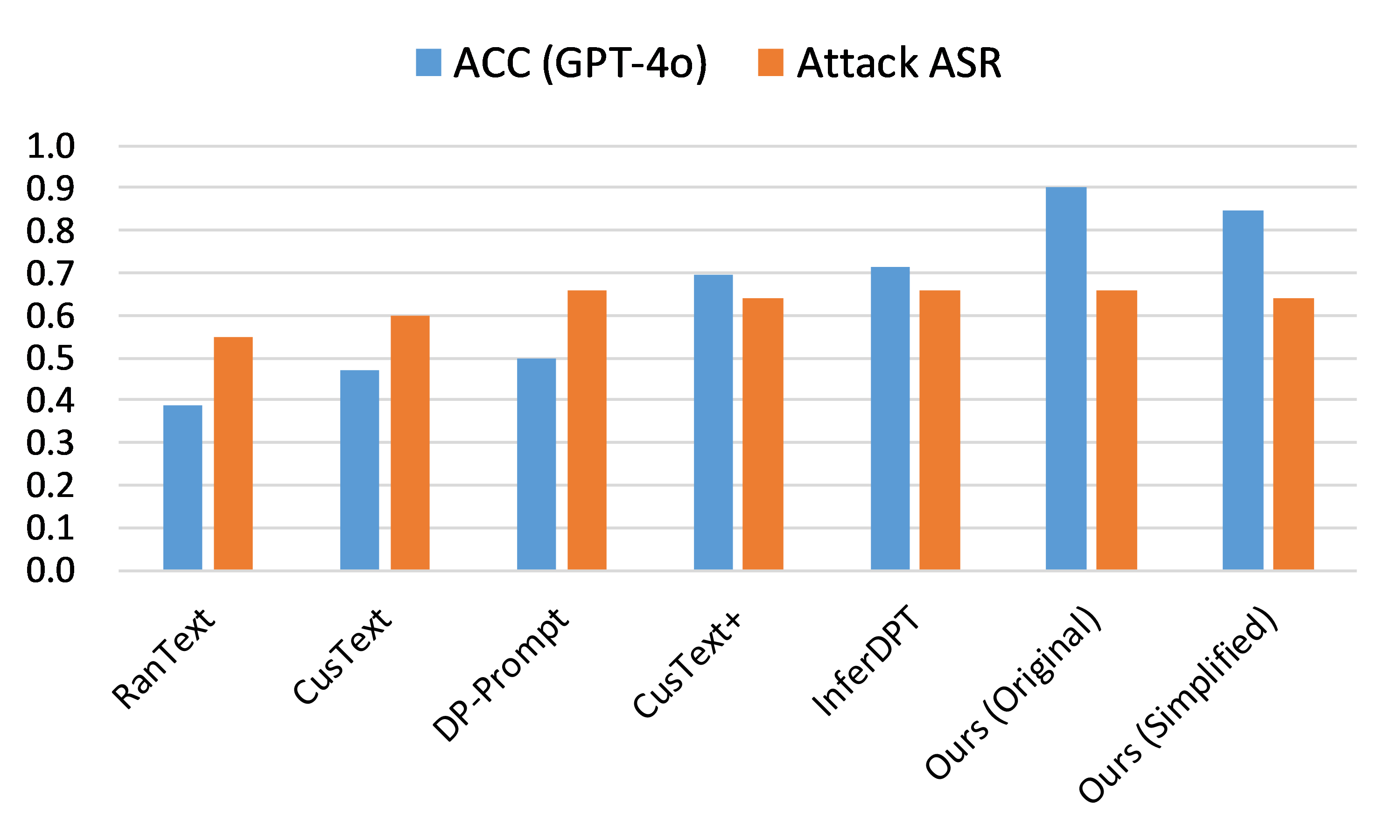}
    \caption{Utility (F1 score) and ASR for attribute inference attack of SharedRequest and the baselines under $\epsilon=1$}
    \label{fig:atccomp}
\vskip -0.1in
\end{figure}

\subsection{Query Cost}
We compare monetary query cost of our protocol against a non-batched baseline, where noisy queries are not shared across users. To simulate real-world request distribution, we draw prompts from a Dirichlet distribution with concentration parameter $\beta$, where larger $\beta$ reflects higher uniformity. As is shown in Figure \ref{fig:querycost}, more concentrated distributions (smaller $\beta$) yield greater cost savings under our request sharing mechanism. At $\beta=0.05$, our protocol reduces the monetary cost by up to $5.6\times$ compared to the non-batched setting. Prompt simplification further enhances this effect, since it increases collisions in generic instructions and amplifies batching efficiency.

\begin{figure}[htp]
    \centering
    \includegraphics[width=0.98\linewidth]{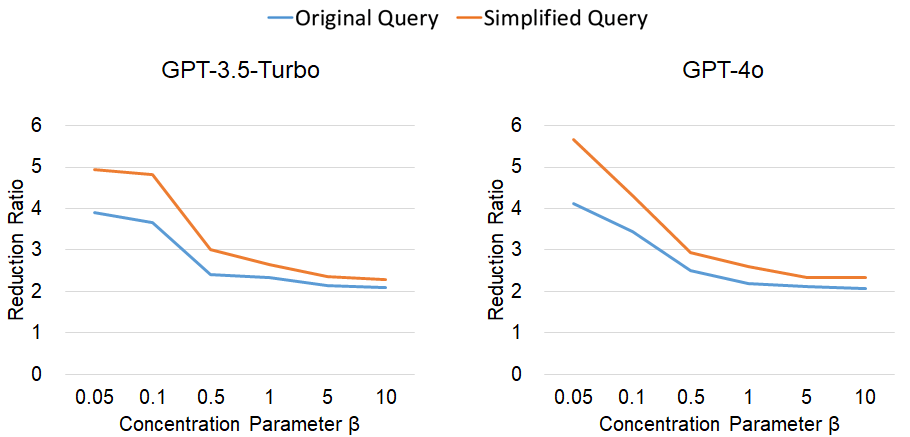}
\caption{Query cost reduction under varying concentration parameters $\beta$ for gpt models. Reduction ratio is measured as the ratio of non-batched baseline cost to that of SharedRequest.}
\label{fig:querycost}
\vskip -0.2in
\end{figure}

\subsection{Computation Cost for Attribute Filtering}

Figure \ref{fig:genmmlu} reports the computation cost required for attribute combination filtering. Computation time increases roughly linearly with sample ratio $\alpha$, staying below $2.1$s for Qwen2.5‑0.5B and $6.2$s for Qwen2.5‑1.5B when $\alpha\leq 20$. Overall, Qwen2.5‑1.5B incurs about a $2.8\times$ overhead compared to Qwen2.5‑0.5B, suggesting the trade-off between computation overhead and privacy protection. Prompt simplification further improves filtering efficiency, reducing runtime by an average of $40.8\%$, with larger efficiency gains at higher $\alpha$. 

\begin{figure}[htp]
    \centering
    \includegraphics[width=0.98\linewidth]{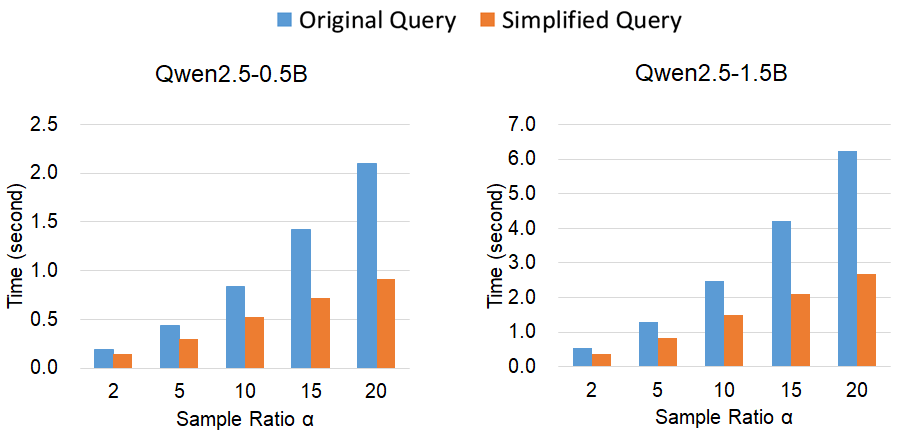}
\caption{Computation cost per generic instruction group (in seconds) for attribute filtering on MMLU-Biz using Qwen2.5‑0.5B and Qwen2.5‑1.5B discriminators.}
\label{fig:genmmlu}
\vskip -0.2in
\end{figure}

\section{Conclusion}
This paper proposes SharedRequest, a privacy-preserving LLM inference paradigm that requires no architectural modifications to the model. Our multi-party protocol obscures user prompts by injecting noisy variants, where the additional inference costs are shared among a large batch of users. To alleviate user-side computation cost, we design a coordinated mechanism between users and noise sampler to sample qualified noisy queries. Theoretical analysis reveals that SharedRequest achieves greater query-cost reduction when request distributions are uneven, such as long-tailed. The empirical evaluation demonstrates that: (1) SharedRequest incurs minor utility degradation compared to non-private baseline, and outperforms DP baselines by over $20\%$. (2) Under concentrated distributions, SharedRequest reduces query costs by up to $5.6\times$ versus non-batched baselines. Further discussions on our framework is provided in Appendix \ref{app:discuss}.

\section*{Limitation}
\label{app:limit}
Our work has the following limitations and practical considerations.

\textbf{Assumption of non-colluding parties.}
Our protocol assumes that the noise sampler can not collude with the service provider to ensure user privacy. In practice, the noise sampler can be: (1) a third party who provides the privacy-related service; or (2) an organization such as finance company that protects the confidentiality of employees’ or clients’ prompts. While two non-colluding parties is a common assumption in SMPC protocols \cite{corrigan2017prio, mohassel2017secureml}, the security could be enhanced by increasing the colluding threshold. In Appendix \ref{app:shareplus}, we extend our protocol to multi-server setting, where an adversary needs to collude with all $m$ noise samplers and the service provider to compromise privacy.

\textbf{Private attributes and alternatives.}
We assume that users are responsible for defining their own privacy, i.e., identifying private attributes and generating alternatives for each attribute. This user‑defined privacy paradigm is increasingly recognized in academic literature and implemented in practice \cite{van2014privacy, busch2019implementing}. Appendix \ref{app:privattr} discusses methods for the self-definition process, along with empirical results.
In practice, users can add an anonymization layer prior to SharedRequest, removing information that is irrelevant to the task yet increases privacy risk. For example, they might might pattern‑match and redact phone numbers or replace user names with generic placeholders like [NAME]. Then SharedRequest can be implemented to protect attributes that are sensitive while relevant for LLM inference.

\section*{Ethical Considerations}
We emphasize that all datasets used in this work are publicly available benchmarks widely adopted in the NLP community for evaluation, such as standard language understanding and task benchmarks. These datasets are intended for research and model comparison and do not contain privately collected or user-generated content with personally identifying information. We also confirm that the data used in our experiments do not include offensive or sensitive content linked to identifiable individuals.

\section*{Acknowledgements}

The authors utilized ChatGPT for polishing writings of the manuscript. All technical content, experimental design, analysis, and conclusions were developed and verified by the authors.

This work was supported by the Chongqing Key Laboratory of Trusted Perception and Interaction Technology for Intelligent and Connected Vehicles, the State Key Laboratory of Intelligent Vehicle Safety Technology, Chongqing Changan Automobile Co., Ltd, and the Chongqing Natural Science Foundation (Grant No. CSTB2024NSCQ-LZX0172). Ding was supported in part by a grant from the Israel Science Foundation (ISF Grant No. 1774/20), and by the European Union (ERC, SCALE,101162665). Views and opinions expressed are
however those of the author(s) only and do not necessarily reflect those of the
European Union or the European Research Council. Neither the European Union
nor the granting authority can be held responsible for them.

\bibliography{custom}

\appendix

\section{Algorithm}
\label{app:algo}
Algorithm \ref{alg:batchforward} and \ref{alg:batchbackward} outline our SharedRequest, consisting of forward and backward transmission.
\begin{algorithm}[htp]
   \caption{SharedRequest: Forward Transmission}
   \label{alg:batchforward}
\begin{algorithmic}
   \STATE \textbf{Input:} User queries $Q=\{q_1, q_2,...,q_N\}$.
   \STATE \textbf{Output:} Responses $R=\{r_1, r_2,...,r_S\}$, where response $r_i$ corresponds to original or noisy query $q_i$.
   % \STATE
   % \STATE \textbf{Forward Transmission}
   \STATE \textbf{User} $i\in [N]$
   \STATE \textbullet~ \textit{Sample noisy attributes}: The user identifies private attributes $A_{q_i}$, and samples alternatives for each individual attributes
   $\mathcal{A}^i=\{\mathcal{A}_1^i, \mathcal{A}_2^i,...,\mathcal{A}_{|A(q)|}^i\}$.
   \STATE \textbullet~ \textit{Encrypt attribute and seed}: The user encrypts the private attributes $A_{q_i}$ and sampled seed $s$ with service provider's public key, such that $Enc_s^i=Enc(pk_s, s||A_{q_i})$.
   \STATE \textbullet~ \textit{Encrypt and submit message}: The user encrypts the message with the noise sampler's public keys: $Enc_n^i=Enc(pk_n, T_{q_i}||\mathcal{A}^i||Enc_s^i)$, and sends the encrypted message $Enc_n^i$ to noise sampler.
   \STATE
   \STATE \textbf{Noise Sampler} $n$
   \STATE \textbullet~ \textit{Decrypt messages}: For every received cipher $Enc_n^i$, the sampler decrypts the cipher with its secret key $sk_n$, producing $T_{q_i}$, $\mathcal{A}^i$, and $Enc_s^i$.
   \STATE \textbullet~ \textit{Batch prompts}: The sampler groups prompts based on their generic instruction via a batching module $\mathcal{B}$, producing $K$ groups $[G_1, G_2, ..., G_K]$ and their corresponding generic instruction $[T_{G_1}, T_{G_2}, ..., T_{G_k}]$.
   \STATE \textbullet~ \textit{Sample noisy query}: For each group $G_k$, the sampler: (i) unions all alternatives for each single attribute; (ii) samples and filters attribute combinations via a local discriminator, producing $\mathcal{A}^n(T_{G_k})$; (iii) randomly samples $\mathcal{N}[v]$ attribute combinations.
   \STATE \textbullet~ \textit{Encrypt attribute and seed}: Given the total count of dummy queries $S=\sum \mathcal{N}$, the sampler randomly generates $S$ seeds. Then it encrypts the noisy attributes and seeds with the service provider's public key.
   \STATE \textbullet~ \textit{Shuffle and forward message}: The sampler shuffles all messages and sends them to the service provider $s$.
    \STATE
   \STATE \textbf{Service Provider} $s$
   \STATE \textbullet~ \textit{Decrypt message}: For every received message $Enc_s^i$, the service provider decrypts the cipher with its secret key $sk_n$, producing $s^i$ and $A_{p^i}$.  
   \STATE \textbullet~ \textit{Generate response}: The service provider recovers each prompt by inserting the private attributes into the generic instruction $T_{G_k}, k\in [K]$. Then it groups the same prompts and generates the response for each group at once.
\end{algorithmic}
\end{algorithm}

\begin{algorithm}[h!]
   \caption{SharedRequest: Backward Transmission}
   \label{alg:batchbackward}
\begin{algorithmic}
   \STATE \textbf{Input:} Responses for a set of queries (including original and noisy ones) $R=\{r_1, r_2,...,r_S\}$.
   \STATE \textbf{Output:} Responses $R=\{r_1, r_2,...,r_N\}$, where response $r_i$ corresponds to original query $q_i$.
   % $e=PRG(s)$, which is used to obfuscate the response via modular addition $r_s=r+e$. (iii) On receiving $r_s$, the user reconstructs the response via modular subtraction $r=r_s-e$
    \STATE \textbf{Service Provider} $s$
   \STATE \textbullet~ \textit{Handle and forward responses}: For each response $r_i$, the service provider generates random number from the corresponding seed $e_i^s=PRG(s_i)$, computes $r_s^i=r_i + e_i^s$, and sends them to noise sampler $n$.
   \STATE 
   \STATE \textbf{Noise Sampler} $n$
   \STATE \textbullet~ \textit{Handle responses}: On receiving the response from service provider $s$, the sampler de-shuffles them using the inverse permutation, and discards the response corresponding to the noisy queries.
   \STATE \textbullet~ \textit{Forward responses}: The sampler sends all responses $r_i^s$ to the corresponding user $i$.
   \STATE 
   \STATE \textbf{User} $i\in [N]$
   \STATE \textbullet~ \textit{Reconstruct response}: The user receives response $r_i^s$ from noise sampler $n$, generates random number from their seed $e_i^s=PRG(s_i)$, and reconstructs $r^i=r_i^i-e_i^s$.
\end{algorithmic}
\end{algorithm}

\section{Grouping and Prompt Simplification}
\label{app:group}
\subsection{Grouping}
To amortize query cost, the noise sampler groups prompts with semantically equivalent generic instructions to share the same plausible attribute sets. We leverage a sentence transformer to encode each instruction into embedding, and apply clustering to group similar embeddings into batches automatically. To improve clustering efficiency and downstream utility, the sampler first pre-groups prompts based on the number of sensitive attributes, enabling parallelized clustering within each subgroup.

% Details of the empirical implementation are provided in Appendix \ref{app:batch}.

\subsection{Prompt Simplification}
Users may optionally apply prompt simplification to improve efficiency by: (i) reducing private attributes, which lowers the sampling and query cost; (ii) standardizing generic instructions, enabling the plausible attributes to be amortized among more queries. Simplification can be achieved through manual prompt engineering or automated tools, such as lightweight local models.

\section{Combination Sampling by Noise Sampler}
\label{app:dpsamp}
Directly combining noisy attributes with real ones can inadvertently expose genuine queries. If a genuine attribute appears multiple times in a batch while a fake one appears only once, an adversarial server would distinguish real queries by frequency patterns. Since private attributes are encrypted, the noise sampler can not guarantee that noisy attributes will match the frequency of genuine ones.

To mitigate this issue, we propose a sampling method based on one-sided exponential mechanism \cite{rozanov2012probability, kozubowski2000multivariate, takagi2022asymmetric}. Instead of injecting each qualified combination exactly once, we add randomized counts $\mathcal{N}[v]$ for each attributes combination $v\in \mathcal{A}^n$. Specifically, for each $v$, the noise sampler draws $\lambda_v$ from the one-sided exponential distribution:
\begin{equation}
   f(\lambda_v)= \begin{cases}\frac{\epsilon}{2}\exp{\frac{-\lambda_v \epsilon}{2}} & \lambda_v \geq 0,\\
    0 & \lambda_v < 0,
    \end{cases}
\end{equation}
with privacy parameter $\epsilon$. The sampler then injects $\mathcal{N}[v]=\lfloor \lambda_v \rfloor$ instances of each combination $v\in \mathcal{A}^n$. Even though each query might appear multiple times, the query cost maintains the same as the service provider groups identical prompts and generates a single response per group for payload optimization (see Appendix \ref{app:dedup}).

\section{Sampling Size}
\label{app:sample}
Denote $k_i$ as the number of candidates for the $i^{th}$ attribute, and $\mu$ is the number of private attributes within a group. The probability that an alternative attribute is not covered in a sample is:
\begin{equation}
    \left(1-1/k_i\right)^m,
\end{equation}
for attribute $i$.

Using union bound, the probability that each of the $\mu$ attributes has all candidates is bounded by:
\begin{equation}
    \sum_{i}^{\mu} k_i\left(1-1/k_i\right)^m \leq \mu k\left(1-1/k\right)^m \leq 1-p,
\end{equation}
where $k = \max\{k_1, k_2, ..., k_{\mu}\}$. Taking the logarithm of both side, we have:
\begin{equation}
    m \geq \frac{\log (1-p) - \log(\mu k)}{\log (1-1/k)}.
\end{equation}

\section{Privacy Analysis}
% \subsection{Connection to Differential Privacy}
% \label{app:connectdp}
\subsection{Resistant to Stronger Adversaries}
\label{app:shareplus}
Figure \ref{fig:extend} presents the overview of our extended version.
\begin{figure*}[!htp]
  \centering
  \includegraphics[width=0.98\linewidth]{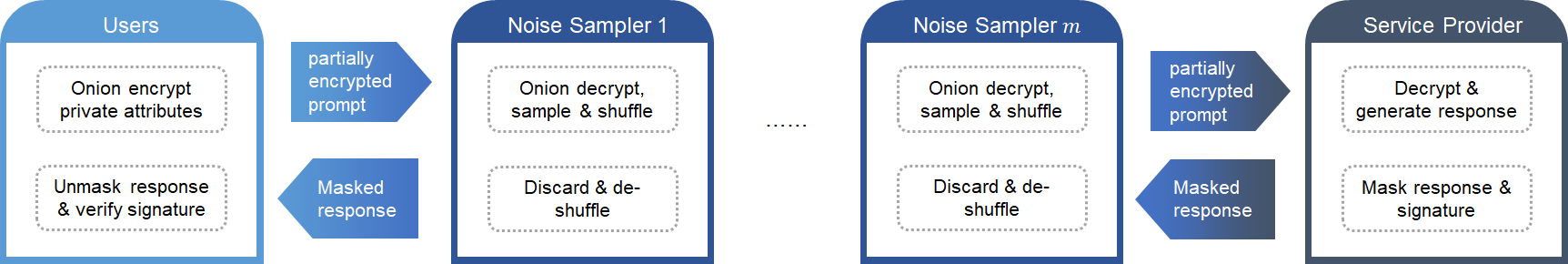}
  \caption{Overview of SharedRequest's extension. The requests are transmitted through $m$ noise samplers. On receiving response, the user unmask the message and verify the signature generated by the service provider.}
  \label{fig:extend}
\end{figure*}

\subsubsection{Collusion between Servers} Our protocol assumes that the noise sampler can not collaborate with service provider. To relax the assumption, we extend our SharedRequest to a multi-server system with $m$ noise samplers. In particular, the following steps in SharedRequest are modified:
\begin{itemize}
    \item \textbf{Onion encryption:} User $i$ onion encrypts the private attribute and seed with the servers public key: 
\begin{equation}
\begin{gathered}
    OEnc_0^i\\
    =Enc(pk_n^1, Enc(...Enc(pk_s, s||A_{q_i}))),
\end{gathered}
\end{equation}
 where $pk_n^j$ is the public key for the $j^{th}$ noise sampler. Noise sampler $j$ onion encrypts the private attribute within noisy prompt:
\begin{equation}
\begin{gathered}
    OEnc_j^i\\
    =Enc(pk_n^{j+1}, Enc(...Enc(pk_s, s||A_{q_i}))).
\end{gathered}
\end{equation}
    \item \textbf{Onion decryption:} Noise sampler $j$ onion decrypts the private attribute and seed with its private key:
\begin{equation}
    OEnc_j^i=Dec(sk_n^j, OEnc_{j-1}^i),
\end{equation}
where $sk_n^j$ is the secret key for the $j^{th}$ noise sampler. The service provider decrypts with its private key $sk_s$:
\begin{equation}
    (s,A_{q_i})=Dec(sk_s, OEnc_{m}^i).
\end{equation}
    \item \textbf{Noisy query sampling.} Each noise sampler generates qualified attribute combinations, and samples from the attributes added to the mix. 
\end{itemize}

\subsubsection{Malicious Noise Samplers} We consider the case where some noise samplers might temper with the transmitted message and return incorrect result to the user, including user queries and service provider's response. To ensure the integrity of message, the service provider creates signature for each query:
\begin{equation}
    \sigma_i = Sign(sk_s^d, q_i||r_i),
\end{equation}
where $sk_s^d$ is the private signing key. Then the user can verify the correctness of the response by asserting that 
\begin{equation}
Verify(pk_s^d, q_i||r_i, \sigma_i)=1.
\end{equation}

\subsubsection{Security Analysis} 
We provide a high-level security analysis of the extended protocol. For secrecy, the views of service provider and noise sampler are similar to that in SharedRequest. The private attributes are onion encrypted through the public keys of $m$ noise samplers and finally the service provider. An adversary would need to compromise all $m$ samplers and the service provider to decrypt the attributes. If at least one noise sampler remains honest, the privacy of user attributes is preserved. Furthermore, each sampler independently shuffles and injects noisy variants to the batch, ensuring that the service provider's view is a mix of noisy and original prompts.

For integrity, the UF-CMA secure signature scheme \cite{kaur2012digital, katz2010digital} prevents the noise samplers from modifying user queries and service provider's responses. Consequently, users are able to identify that the output is incorrect and reject it.

\subsection{Proof of Theorem \ref{thm:indist}}
\label{app:proofprivacy}
\begin{proof}
In our algorithm, modifying the private attributes within a prompt does not alter the batching outcome; it only affects the server's observed frequency of each attribute combination within a batch. Therefore, we can prove theorem \ref{thm:indist} by showing that the frequency of the private attributes, $H$, satisfies $(\mathcal{A}^n,\epsilon)$-indistinguishability.

Changing $Q$ to $Q'$, i.e., changing the private attribute in one prompt, corresponds to decreasing one element in $H$ by 1 and increasing another element in $H$ by 1. Let $f:\mathcal{Q}\rightarrow \mathcal{H}$ be the function that maps the group of prompts to the frequency before injecting noisy query, and $M':\mathcal{Q}\rightarrow \mathcal{H}$ be the mechanism that maps to the frequency after adding $\lambda$. Then we have:

\begin{equation}
\begin{gathered}
\frac{p(M'(Q)=z)}{p(M'(Q')=z)} = \frac{\prod_i \exp{\left(-\epsilon(z_i-f(Q)_i)/2\right)}}{\prod_i \exp{\left(-\epsilon(z_i-f(Q')_i)/2\right)}} \\
\leq\exp{\left(\sum_i\epsilon|f(Q')_i-f(Q)_i|/2\right)} \leq \exp{(\epsilon)}.
\end{gathered}
\end{equation}
% for any $z$ with $z_i\geq 0$, where $z_i$ represents the frequency count of the $i^{th}$ attribute. If $z_i<0$ for any $i$, then:
% \begin{equation}
% \frac{p(f(Q)=z)}{p(f(Q')=z)}=0\leq \exp{(\epsilon)}.
% \end{equation}

Therefore, for any subset $S\in \mathcal{H}$, it holds that:
\begin{equation}
\begin{gathered}
P(f(Q)\in S) \leq \exp{(\epsilon)} P(f(Q')\in S).
\end{gathered}
\end{equation}

Taking the floor of $\lambda$ is equivalent to taking the floor of frequency after adding $\lambda$. Then we have:
\begin{equation}
\begin{gathered}
    P(\lfloor f(Q) \rfloor=y) = P\left(f(Q)\in floor^{-1}(z)\right) \\
    \leq   \exp{(\epsilon)}  P\left(\left(f(Q')\in floor^{-1}(z)\right)\right)\\
    = \exp{(\epsilon)} P(\lfloor f(Q') \rfloor=y),
\end{gathered}
\end{equation}
where $floor^{-1}$ denotes the inverse of floor function. Hence we complete the proof.

\end{proof}

\subsection{Preservation under Post-processing and Composition}
\label{app:compprivacy}
Proposition \ref{prop:postprocess} and \ref{prop:comp} provides the privacy preservation under post-processing and composition.
\begin{proposition}
\label{prop:postprocess}
Let $M: \mathcal{Q}\rightarrow \mathcal{Y}$ be a $(\mathcal{A}^n,\epsilon)$-indistinguishability mechanism. Let $f: \mathcal{Y} \rightarrow \mathcal{Z}$ be an arbitrary (possibly randomized) mapping. Then $f\circ M: \mathcal{Q}\rightarrow \mathcal{Z}$ is $(\mathcal{A}^n,\epsilon)$-indistinguishability.
\end{proposition}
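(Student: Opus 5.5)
The plan follows the standard post-processing argument from differential privacy. Nothing in it uses the specific structure of $\mathcal{A}^n$-neighbors: the definition of $(\mathcal{A}^n,\epsilon)$-indistinguishability is an inequality between output distributions for every neighboring pair $Q$ and $Q'\in N_{\mathcal{A}^n}(Q)$. So we fix such a pair and a measurable set $S\subseteq\mathcal{Z}$, and aim to show
\begin{equation}
\Pr[f(M(Q))\in S]\leq e^{\epsilon}\Pr[f(M(Q'))\in S].
\end{equation}

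\textbf{Deterministic $f$.} Set $T=f^{-1}(S)=\{y\in\mathcal{Y}: f(y)\in S\}$, which is a subset of $\mathcal{Y}$ (measurable if $f$ is). Then $\Pr[f(M(Q))\in S]=\Pr[M(Q)\in T]$. Applying the definition of $(\mathcal{A}^n,\epsilon)$-indistinguishability to the set $T$ gives $\Pr[M(Q)\in T]\leq e^{\epsilon}\Pr[M(Q')\in T]$. Rewriting the right-hand side as $e^{\epsilon}\Pr[f(M(Q'))\in S]$ finishes this case.

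\textbf{Randomized $f$.} Model $f$ as a Markov kernel, i.e.\ $f(y)=g(y,R)$ with $R$ drawn independently of $M$'s internal randomness and of the input. Conditioning on $R=r$ reduces to the deterministic map $g(\cdot,r)$, so the previous step gives
\begin{equation}
\Pr[g(M(Q),r)\in S]\leq e^{\epsilon}\Pr[g(M(Q'),r)\in S]
\end{equation}
for every $r$. Integrating both sides against the distribution of $R$ preserves the inequality, because the factor $e^{\epsilon}$ is constant and integration is monotone. This yields the claim for $f\circ M$.

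\textbf{Main subtlety.} The only point needing care is that the randomness of $f$ must be independent of the input $Q$ and of $M$'s coins; otherwise the conditioning step fails. I would state this explicitly as part of what ``arbitrary (possibly randomized) mapping'' means. I would also note that measurability of $f^{-1}(S)$ is assumed, as in the standard DP treatment. Beyond that, the argument is routine.
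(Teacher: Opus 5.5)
Your argument is correct, but it integrates out a different source of randomness than the paper's proof does.

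The paper treats deterministic and randomized $f$ in a single step. It writes $\Pr[f(M(Q))\in S]$ as the integral of the kernel $y\mapsto\Pr[f(y)\in S]$ against the output distribution of $M(Q)$. It then applies $\Pr[M(Q)=y]\le e^{\epsilon}\Pr[M(Q')=y]$ pointwise in $y$ and reassembles the integral for $Q'$.

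You instead freeze $f$'s coins and pull $S$ back to the preimage $g(\cdot,r)^{-1}(S)\subseteq\mathcal{Y}$. You apply the definition on that set and then average over $R$.

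The paper's route needs no case split and no functional representation $f=g(\cdot,R)$; it only needs $f$ to be a Markov kernel. However, its pointwise step has to be read in one of two ways. It can be a statement about probability mass functions, which is harmless here since the provider's view is a finite multiset of prompts. In general, it must be read as the measure inequality $\mu_Q\le e^{\epsilon}\mu_{Q'}$, extended from sets to nonnegative integrands by simple-function approximation.

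Your route uses the definition exactly as stated, on sets, so it works verbatim for arbitrary measurable output spaces. The cost is the representation $f(y)=g(y,R)$ (available, e.g., whenever $\mathcal{Z}$ is standard Borel) plus a Fubini step.

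The independence requirement you flag is the same one the paper uses tacitly. It writes $\Pr[f(y)\in S\mid y]$ with no dependence on $Q$ or on $M$'s coins.
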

\begin{proof}
    Fix pair of neighboring input $Q$, $Q'\in N_{\mathcal{A}^n} (Q)$, and fix a a measurable set $S \subset \mathcal{Z}$. By the definition of $f$, we have:
    \begin{equation}
    \begin{gathered}
        \Pr[f(M(Q)) \in S] \\
        =\int_y \Pr[f(y) \in S|y] \cdot \Pr[M(Q)=y] dy \\
        \leq \int_y \Pr[f(y) \in S|y] \cdot e^{\epsilon}\Pr[M(Q')=y] dy\\
        =e^{\epsilon}\Pr[f(M(Q')) \in S].
    \end{gathered}
    \end{equation}
\end{proof}

\begin{proposition}
\label{prop:comp}
Let $M_i: \mathcal{Q}\rightarrow \mathcal{Y}_i$ be a $(\mathcal{A}^n,\epsilon_i)$-indistinguishability mechanism. Then if $\underline{M}$ is defined by:
\begin{equation}
    \underline{M}(Q) = (M_1(Q), M_2(Q),\cdots,M_k(Q)),
\end{equation}
\underline{M} satisfies $(\mathcal{A}^n,\epsilon)$-indistinguishability.
\end{proposition}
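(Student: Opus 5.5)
The plan is to prove the statement with $\epsilon=\sum_{i=1}^k \epsilon_i$, under the standing assumption (implicit in how the protocol is run across batches) that the mechanisms $M_1,\dots,M_k$ use independent internal randomness. I will follow the standard argument for basic composition in differential privacy, since $(\mathcal{A}^n,\epsilon)$-indistinguishability has the same form and differs only in the neighbor relation $N_{\mathcal{A}^n}$. That relation is fixed and shared by all $M_i$, so it plays no role in the argument beyond being the same for each factor.

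Fix $Q$ and $Q'\in N_{\mathcal{A}^n}(Q)$. As in the proof of Proposition~\ref{prop:postprocess}, I will work with densities (or probability mass functions in the discrete case) $\Pr[M_i(Q)=y_i]$. The first step is to turn each hypothesis, which is stated for all measurable sets, into a pointwise bound
\[
\Pr[M_i(Q)=y_i]\le e^{\epsilon_i}\Pr[M_i(Q')=y_i]
\]
holding almost everywhere. In the discrete case this is immediate by taking singletons $S=\{y_i\}$. In the continuous case it follows because a density inequality that holds on every set integrates correctly, so the set where it fails has measure zero.

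The second step uses independence to factor the joint density:
\[
\Pr[\underline{M}(Q)=(y_1,\dots,y_k)]=\prod_{i=1}^k \Pr[M_i(Q)=y_i].
\]
Multiplying the $k$ pointwise bounds then gives
\[
\Pr[\underline{M}(Q)=y]\le e^{\sum_i\epsilon_i}\Pr[\underline{M}(Q')=y].
\]
The final step integrates this bound over an arbitrary measurable $S\subseteq \mathcal{Y}_1\times\cdots\times\mathcal{Y}_k$, which yields $\Pr[\underline{M}(Q)\in S]\le e^{\epsilon}\Pr[\underline{M}(Q')\in S]$.

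The main obstacle is the passage from set-wise guarantees to pointwise density ratios, and the need to handle sets $S$ that are not rectangles. I will avoid any delicate measure theory in two ways. First, I will note that all outputs actually released by the protocol (frequency counts after flooring, as in the proof of Theorem~\ref{thm:indist}) are discrete, so working with singletons suffices. Second, as a fallback for general $S$, I will run an induction on $k$: write $\Pr[\underline{M}(Q)\in S]$ as an integral over $y_1$ of $\Pr[(M_2,\dots,M_k)(Q)\in S_{y_1}]$, where $S_{y_1}$ is the section of $S$ at $y_1$. The inductive hypothesis applies to the inner term, and the base guarantee for $M_1$ applies after the inner term has been bounded.

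If the $M_i$ are allowed to be adaptive, I would add one remark. In that case each $M_i$ is chosen based on the outputs of $M_1,\dots,M_{i-1}$, and the same chain-rule argument goes through by conditioning on those earlier outputs.
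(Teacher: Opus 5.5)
Your proof is correct and rests on the same core idea as the paper's. Both assume independent randomness across the $M_i$, factor the joint law, multiply the individual guarantees, and obtain $\epsilon=\sum_i\epsilon_i$.

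The difference is the level at which you factor. The paper checks the inequality only for product sets $S=S_1\times S_2$, using $\Pr[(M_1(Q),M_2(Q))\in S]=\Pr[M_1(Q)\in S_1]\cdot\Pr[M_2(Q)\in S_2]$, and then inducts on $k$. It leaves implicit why this covers non-rectangular $S$. It does cover them: with $P,P'$ the joint laws under $Q,Q'$, the quantity $e^{\epsilon_1+\epsilon_2}P'-P$ is a finite signed measure. It is nonnegative on the algebra of finite disjoint unions of rectangles, and the sets on which it is nonnegative form a monotone class.

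You instead turn each set-wise guarantee into a pointwise density bound, by taking $S=\{p_i>e^{\epsilon_i}p_i'\}$. In your fallback, you use sections together with the extension of the set-wise guarantee to nonnegative functions, as in the paper's post-processing proof. Either way, your bound holds for every measurable $S$ directly.

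The paper's route is shorter and needs no densities. Yours makes the rectangle-to-general-set step explicit, and your conditioning remark also covers adaptive composition, which the paper does not treat.
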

\begin{proof}
    We start with the proof for $k=2$. Let $Q$, $Q'\in N_{\mathcal{A}^n} (Q)$ be the $\mathcal{A}^n$-neighbors, and let $S=S_1\times S_2$ be any measurable set in the product output space. As $M_1$, $M_1$ have independent randomness, we have:
    \begin{equation}
    \begin{gathered}
        \Pr[(M_1(Q), M_2(Q))\in S] \\
        = \Pr[M_1(Q)\in S_1] \cdot \Pr[M_2(Q)\in S_2]\\
        \leq e^{\epsilon_1}\Pr[M_1(Q')\in S_1] \cdot e^{\epsilon_2}\Pr[M_2(Q')\in S_2] \\
        =e^{\epsilon_1+\epsilon_2} \Pr[(M_1(Q'), M_2(Q'))\in S].
    \end{gathered}
    \end{equation}
    The result can extend inductively to any $k>2$.
\end{proof}

\section{Query Cost Analysis}
\label{app:querycost}
\subsection{Proof of Theorem \ref{thm:complexity}}
\begin{proof}
For each instruction group, it appears in the batch for at least one time with probability $1-(1-p_i)^B$. Then the expected per-query cost is as follows:
\begin{equation}
    E(C) = \frac{1}{B}\sum_{i=1}^{|\mathcal{T}|} \left(1-(1-p_i)^B\right) c_i. 
\end{equation}
Thus we focus on the proof for the remaining part under $c_i=c$ for all generic instructions. To prove the relationship between $E(C)$ and $H(T)$, we start with a basic case: $p_i$ increases with $\delta$, and $p_j$ decreases with $\delta$ for any pair $p_i, p_j$.

For $E(C)$, its derivative with regards to $\delta$ is given by:
\begin{equation}
\begin{gathered}
    \frac{\nabla E(C)}{\nabla \delta} =  \frac{\nabla}{\nabla\delta} \\
    \left(\frac{c}{B}\left(\left(1-(1-p_i-\delta)^B\right)  +\left(1-(1-p_i+\delta)^B\right)\right)\right)\\
    = c\left((1-p_i-\delta)^{B-1}-(1-p_j+\delta)^{B-1}\right).
\end{gathered}
\end{equation}
Then $E(C)$ increases in $\delta$ if $\delta \leq (p_j-p_i)/2$, and decreases in $\delta$ otherwise.

For $H(T)$, it derivative is as follows:
\begin{equation}
\begin{gathered}
    \frac{\nabla H(T)}{\nabla \delta}=\frac{\nabla}{\nabla\delta} \\
    \left((p_i+\delta)\log (p_i+\delta)+(p_j-\delta)\log (p_j-\delta)\right)\\
    =\log \left(\frac{p_i+\delta}{p_j-\delta}\right).
\end{gathered}
\end{equation}
$H(T)$ also increases in $\delta$ if $\delta \leq (p_j-p_i)/2$, and decreases in $\delta$ otherwise. Therefore, the change in $\delta$ leads to the same direction of change for $E(C)$ and $H(T)$.

We can generalize the above case to any change from $\mathbf{p}=(p_1,...,p_{|\mathcal{T}|})$ to $\mathbf{p}'=(p_1',...,p_{|\mathcal{T}|}')$, which can be achieved through $|\mathcal{T}|-1$ steps of the following change (for the $i^{th}$ steps):
\begin{itemize}
    \item Add $\delta=(p_i'-p_i+p_{i-1}'-p_{i-1}+\cdots+p_1'-p_1)$ to $p_i$. 
    \item Subtract the same $\delta$ from $p_{i+1}$. 
\end{itemize}

For each step, the direction of change for $E(C)$ and $H(T)$ is the same, and thus $E(C)$ and $H(T)$ both increase or decrease with regards to the instruction distribution $\mathbf{p}$.

Given that $E(C)$ is positively association with $H(T)$, it maximize at the uniform distribution:
\begin{equation}
\begin{gathered}
    E(C)=\frac{1}{B}\sum_{i=1}^{|\mathcal{T}|} \left(1-(1-1/|\mathcal{T}|)^B\right)c.
\end{gathered}
\end{equation}
Taking the limit:
\begin{equation}
    \lim_{|\mathcal{T}|\rightarrow\infty} E(C) = \frac{|\mathcal{T}|}{B} (1-e^{-B/|\mathcal{T}|}) c.
\end{equation}
\end{proof}

\subsection{Comparison with Non-batched Scenario}
For non-batched scenario, we consider the simple construction where each user sends the noisy and original prompts to the server individually, without the noise sampler for aggregated sampling and shuffling. Denote $c_j'$ and $p_j'$ as, respectively, the average query cost and sampling probability for query $q_j, j\in [|\mathcal{Q}|]$. The cost reduction ratio can be formulated as:
\begin{equation}
\begin{gathered}
    \frac{E(C)}{E(C')} = \frac{\sum_{i=1}^{|\mathcal{T}|} \left(1-(1-p_i)^B\right) c_i}{B \sum_{j=1}^{|\mathcal{Q}|} p_j' c_j'},
\end{gathered}
\end{equation}
where $E(C')$ is the expected per-query cost for non-batched scenario.

Noted that $c_i$ and $c_j'$ is determined by the token length and number of qualified alternatives for each query or instruction group. For simplicity, we assume that $c_i=c_j'=c$ for all $i\in [|\mathcal{T}|], j\in [|\mathcal{Q}|]$. This is reasonable since the per-instruction overhead in our SharedRequest is similar to the average per-query overhead of the non‑batched baseline. Then we have:
\begin{equation}
\begin{gathered}
    \frac{E(C)}{E(C')}  = \frac{\sum_{i=1}^{|\mathcal{T}|} \left(1-(1-p_i)^B\right)}{B} = \frac{E(C)}{c}.
\end{gathered}
\end{equation}
Therefore, the reduction ratio increases with $E(C)$ and $H(T)$, maximizing at the uniform distribution:
\begin{equation}
    \lim_{|\mathcal{T}|\rightarrow\infty} \frac{E(C)}{E(C')} = \frac{|\mathcal{T}|}{B} (1-e^{-B/|\mathcal{T}|}).
\end{equation}
\section{Experiment}

\subsection{Experiment Setting}
For prompt grouping, we use paraphrase-distilroberta-base-v1 model \cite{reimers2019sentence} to convert generic instruction into embeddings, and leverage RAC \cite{sumengen2021scaling} to cluster these embeddings. We set sampling ratio $\alpha=10$, privacy parameter $\epsilon=1$, and batch size $B=5000$ unless specified. We employ RSA cryptosystem \cite{rivest1978method} for asymmetric encryption. The experiments are conducted on a $96$-core Ubuntu Linux 20.04 server with $128$GB RAM and 2 A100 driver. Each reported experiment result is an average of $3$ experiments.
\subsubsection{Prompt Simplification}
To train simplification model, We finetune Qwen3-4B \cite{yang2025qwen3} for $5$ epochs with a learning rate of $0.00001$ using AdamW optimizer~\cite{iclr/LoshchilovH19}, using demonstrations generated by GPT-4o. Specifically, we ask GPT-4o to simplify a question from the three datasets following the rules:
\begin{itemize}
    \item Remove all unnecessary or redundant information.
    \item Keep all important information necessary to answer the question.
    \item Do not use abbreviations or emojis.
    \item Compress the origin as short as you can. 
\end{itemize}
The initial version of simplification demonstrations may lose important information. To ensure quality, we iteratively refine the dataset. If the simplified version yields lower response quality compared to the original, we prompt GPT‑4o with the original prompt, initial simplification, and response evaluation to generate an improved version. The refinement refinement loop runs for $5$ rounds.

For simplified version of SharedRequest, we finetune the simplification model on $80\%$ of the constructed dataset, and implement the simplification on the remaining $20\%$ for evaluation.
\subsubsection{Cluster Algorithm}
RAC \cite{sumengen2021scaling} is employed to cluster the sentence embeddings during prompt grouping. Clustering is based on Euclidean distance between embeddings, with a merging threshold of $0.3$, and executed using $8$ parallel threads. After clustering, a single instruction is randomly sampled from each cluster to serve as the shared generic instruction for that group.

\subsubsection{Output Arrangement}
To prevent the noise sampler or any external observer from inferring private information via side channels in response lengths and timings, the service provider pads all outputs within a cluster to the maximum length observed in that cluster using masked values. Masked responses from each cluster are then delivered to the noise sampler at a constant rate of 20 tokens/s.

\subsubsection{Discrimination Model}
We tested two Qwen-based discriminators, i.e., Qwen2.5-0.5B and Qwen2.5-1.5B \cite{qwen2025qwen25technicalreport}. We input the noisy and original queries, as well as their binary labels, to train the discriminator for $5$ epochs with a learning rate of $0.00001$ using AdamW optimizer, with balanced batches containing noise and original queries in a 1:1 ratio. We maintain strict separation between the datasets used for training the discriminator and those used for attack models to prevent any data leakage.

\subsubsection{Utility Evaluation Metrics}
MMLU-Biz, which consists of multiple-choice questions, is evaluated using F1 score. Medical‑QA and Legal‑QA, which comprise of open-ended questions, are scored from 1 to 10 using GPT‑4o as an automatic judge, following \cite{zheng2023judging}.

\subsubsection{Latency Analysis}
We focus on the active processing time for latency analysis. We assume the noise sampler parallelizes API calls and cryptographic operations across each prompt, and parallelizes attribute combination filtering across each generic instruction. As we cannot directly deploy proprietary GPT models, we estimate the LLM inference time using the average query time per prompt without privacy protection. We use Qwen2.5-0.5B for combination filtering.

\subsection{Data Processing}
\subsubsection{Dataset Construction} 
The MMLU-Biz dataset is constructed by filtering questions from the business or finance-related categories: business ethics, econometrics, marketing, high school macroeconomics, high school microeconomics, management, and marketing. The final MMLU-Biz dataset contains $1534$ samples. To ensure equal representation across domains (business, legal, and medical), we then randomly sample 2,000 items each from the Legal‑QA and Medical‑QA datasets. The final evaluation set therefore consists of balanced sample sizes across the three domains. In experiment, we sample the items from Dirichlet distribution with concentration parameter $\beta=1$.

\subsubsection{Private Attributes}
The private attributes for each sample are labeled with GPT-4o using the prompt in Figure \ref{fig:privattr}. The GPT-extracted attributes are then manually checked and corrected to ensure quality.
\begin{figure*}[!htp]
\begin{small}
\begin{tcolorbox}[
                    % width=\linewidth,
                  %%enhanced,
                  %%frame hidden,
                  boxsep=2pt,
                  left=2pt,
                  right=2pt,
                  top=2pt,
                  % float=htpb!,
                  ]%%
  \textbf{User Prompt:}

  Please extract all words or phrases in the question that indicate sensitive attributes.

  The sensitive attributes include: Personal identifiable information, job, race, religion, ethnicity, religion, beliefs, age, contact information, geolocation data, residency and citizenship status, date, festival, sexual orientation or practices, gender identity, third-person pronoun, disability, employment and income details, company, physical health and mental health issues, clinical signs including medical conditions and physical examination data, behavioral data, name of medicine, including specific drugs, drug classes, and general medication types (e.g., nasal decongestants, antihistamines, beta-blockers, etc), financial data, vulnerable financial status, account details, loan information, tax records, contract, trading algorithm, confidential business information, educational records, voting status, membership in a trade union, social media and digital footprint, legal consultation and case status, Law enforcement jurisdiction and procedures, consumer rights, ownership details, criminal record or name, biometric data, genetic information, legal proceedings and disputes, surveillance and monitoring data, military and security clearance.

  Your task is to strictly extract phrases that refer to these sensitive attributes. Even if a term indirectly refers to a sensitive attribute (e.g., a drug category instead of a specific medicine name), it should be included in the list. The list of sensitive attributes provided is not exhaustive — if you encounter any word or phrase that could reasonably be considered sensitive information under privacy laws or common data protection standards (like GDPR, HIPAA, or similar frameworks), include it in the list. Prioritize any data that could identify a person, describe their personal circumstances, or reveal confidential, medical, financial, or legal information.

  Please preserve the original words from the user's question to form the attribute list and do not convert full names to abbreviations or abbreviations to full names, and do not add any extra words. Retain all duplicates, regardless of form variations (e.g., burglarize, burglarized, burglarizing).
  
Please strictly return a list of phrases in the format of ["attribute 1", "attribute 2", ..., "attribute n"]. If there is no sensitive attribute, return an empty list []. Try to identify as much as phrases as possible.

Question: \{\textit{question}\}

Sensitive attributes: 
\end{tcolorbox}
\end{small}
\caption{The GPT-4o instructions used for private attribute extraction.}
\label{fig:privattr}
\end{figure*}
\subsubsection{Alternative Attributes}
To sample alternatives for each single attributes, we prompt GPT-4o to generate candidates using the instruction in Figure \ref{fig:fake}. Each attribute has around $10$ candidates on average.

\begin{figure*}[!htp]
% \begin{small}
\begin{tcolorbox}[
                    % width=\linewidth,
                  %%enhanced,
                  %%frame hidden,
                  boxsep=2pt,
                  left=2pt,
                  right=2pt,
                  top=2pt,
                  % float=htpb!,
                  ]%%
  \textbf{User Prompt:}
  
You are an AI designed to generate accurate, contextually relevant, and trustworthy fake sensitive attributes. 
Your task is to generate one list of fake sensitive attributes for each original attribute provided in filtered private attributes based on the context of the 'question'. 
The generated attributes must meet the following criteria:

1. Numerical Handling:

    - If the original attribute is primarily numerical (e.g., percentages, durations, counts), generate fake numerical attributes that have broader variance to reduce guessability.
    
    - Avoid clustering numbers closely around the original value.
    
    - Avoid adding unnecessary descriptive context (e.g., avoid turning "50\%" into "55\% of alternative components"). Instead, generate standalone numerical fakes like "20\%", "90\%", or "10\%".
    
    - Do not introduce numerical values if the original attribute is non-numerical in any form. (e.g. aviod turning "internship period" into "3 months")

    2. Length Consistency: Ensure that the generated fake attributes have a similar length (in words or characters) to the original attribute to maintain fluency and reduce guessability.

    3. Sound Sensitive and Relaiable:

    - The fake attributes should resemble private, confidential, or sensitive concepts, particularly in legal, medical, or financial domains."
    
    - Generated fake attributes actually exist in real-world literature.

    4. No Rephrasings or Simple Synonyms: Avoid generating superficial rephrasings or synonyms (e.g., do not turn "80 years" into "eight decades"):
    
    - Create attributes that keep contextual depth and believability.
    
    - Do not generate fakes by slightly modifying words.
    
    - The genearted word should under the same category of the original attribute but with strictly distinct meanings.
    
    5. Diversity and Independence:
    
    - Generate independent fake attributes for each prompt and do not reference or rely on fake attributes generated for the other prompt. Avoid generating fake attributes that are too similar to each other.
    
    - Ensure that the fake attributes are diverse and do not share common themes or patterns.
    
    - if the input private attributes are dependent based on the context, generate fake attributes that are dependent on the context as well.
    
    6. Contextual Fluency:
    
    - Make sure that the fake attributes fit naturally and coherently within the sentence structure for each context.
    
    The list must be structured as follows:
    
    % - The first attribute must be the original attribute.
    
    - The next five attributes must be unique, fake attributes.
    
    - Output format for each list is: [["fake attribute 1\_1", ..., "fake attribute 1\_n"],...]
    
    The input is structured as follows:
    
    Private Attributes: \{\textit{private attributes}\}
    
    Question: \{\textit{question}\} 
    
    Fake attributes for question:
\end{tcolorbox}
% \end{small}
\caption{The GPT-4o instructions used for alternative attribute generation.}
\label{fig:fake}
\end{figure*}

% \subsection{Comparison with DP Baselines}
% \label{app:computility}
% To provide similar privacy guarantee with the relax DP version of SharedRequest, we adapt CusText+ by replacing only the private attributes from the user-defined candidate lists using exponential sampling mechanism. Then the algorithm satisfies $\epsilon$-DP with similar neighborhood of SharedRequest. The full results under $\epsilon$ from $1$ to $5$ is presented in Table \ref{tab:utilitydpfull}. Even under the highest privacy budget, our method consistently outperforms the baselines.

\subsection{Computation Cost of Combination Filtering}
Figure \ref{fig:gencostall} presents the computation time for attribute combination filtering across datasets. Legal-QA incurs the highest computation cost due to its longer questions and more private attributes, under $6$s for Qwen2.5-0.5B and $15$s for Qwen2.5-1.5B for $\alpha\leq20$. Under $\alpha\leq10$, the computation cost stays within $3$s for Qwen2.5-0.5B and $6$s for Qwen2.5-1.5B. Prompt simplification consistently reduces the computation cost by an average of $32.7\%$, $63.1\%$, and $53.3\%$ for MMLU-Biz, Legal-QA, and Medical-QA, respectively.
\begin{figure*}[h!]
    \centering
    \includegraphics[width=0.98\linewidth]{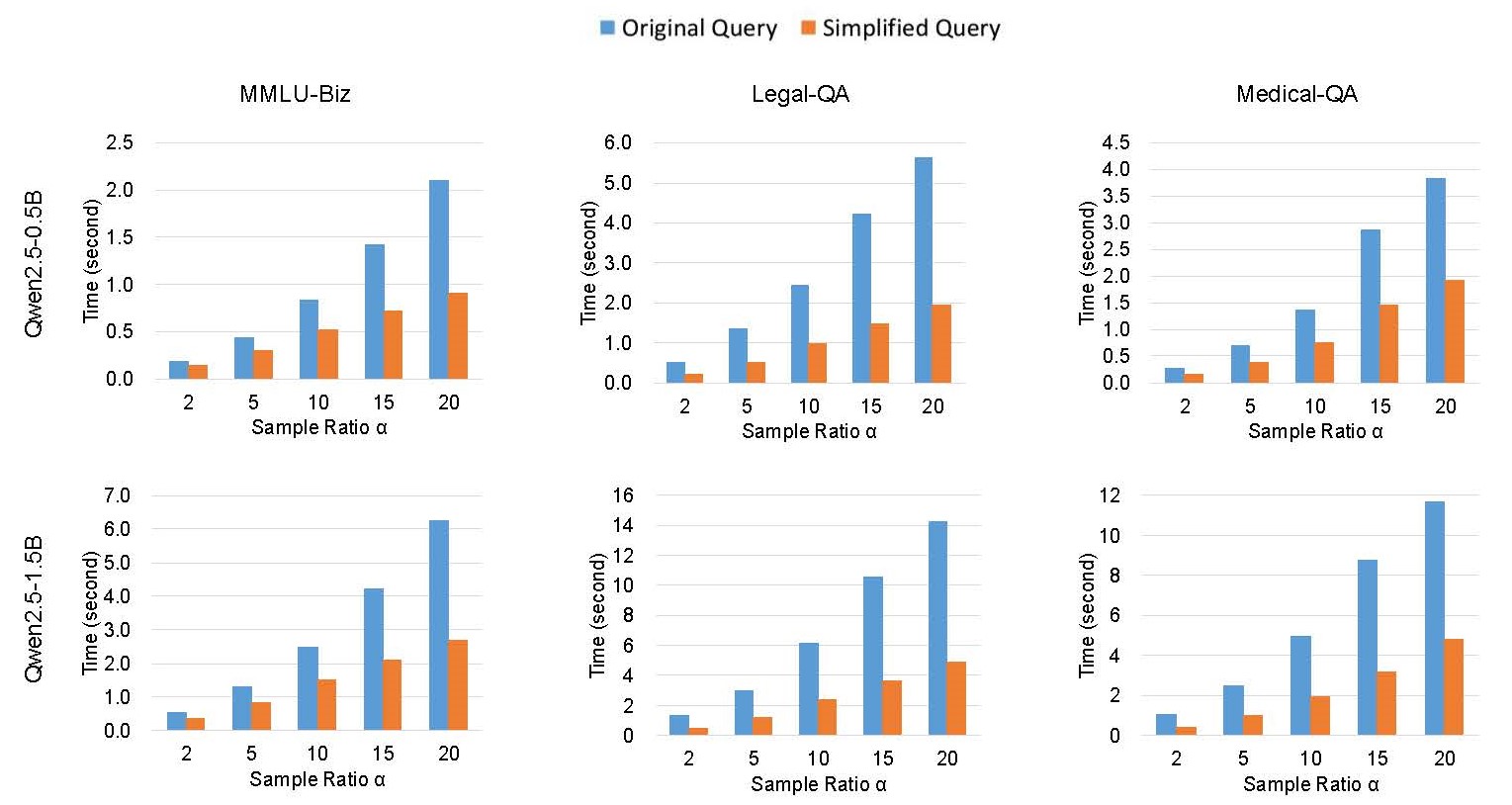}
\caption{Computation cost per generic instruction group (in seconds) for attribute filtering on three datasets using Qwen2.5‑0.5B and Qwen2.5‑1.5B discriminators.}
\label{fig:gencostall}
\end{figure*}

% \subsection{Attribute Inference Attack}

\subsection{Statistics Analysis of Private Attributes}

We summarize the statistics of private attributes and their noisy alternatives in Table \ref{tab:privstat}. We can observe that Legal-QA and Medical-QA have more private attributes within each prompt, leading to higher number of noisy combinations. Furthermore, our filtering algorithm ensures that each single attribute has over $8$ alternatives on average.
\begin{table}[h!]
\begin{center}
% \begin{sc}
\scalebox{0.9}{
\begin{small}
\begin{tabular}{lccc}
\hline
 & \makecell{\# of Private\\ Attributes} & \makecell{\# of Noisy\\ Combinations} & \makecell{\# of Alternatives\\ for Single Attribute}\\
\hline
 & \multicolumn{3}{c}{MMLU-Biz} \\
\cmidrule(l){2-4} 
Original & 3.4 & 49.4 & 9.1 \\
Simplified & 2.3 & 34.2 & 8.6 \\
\hline
& \multicolumn{3}{c}{Legal-QA} \\
\cmidrule(l){2-4} 
Original & 10.7 & 118.7 & 11.1 \\
Simplified & 5.6 & 100.2 & 11.8 \\
\hline
& \multicolumn{3}{c}{Medical-QA} \\
\cmidrule(l){2-4} 
Original & 7.9 & 62.4 & 9.4\\
Simplified &5.4 & 52.4 & 8.7\\
\hline
\end{tabular}
\end{small}
}
\end{center}
\caption{Statistics of private attribute for each dataset. The values are summarized as the average over all samples, using Qwen2.5-0.5B discriminator under sample ratio $\alpha=10$.}
\label{tab:privstat}
\end{table}

\subsection{Query Cost}
We examine the query cost under privacy parameter $\epsilon$ ranging from $0.01$ to $10$ in Figure \ref{fig:querycosteps}. While the cost decreases as $\epsilon$ increases (i.e., privacy is weakened), the query cost and thus reduction ratio retain stable under $\epsilon\leq 0.1$. This happens because the service provider aggregates and de-duplicates identical prompt variants for LLM inference, and thus the query cost is upper bounded by the size of qualified attribute combinations.

\begin{figure}[h!]
    \centering
    \includegraphics[width=0.98\linewidth]{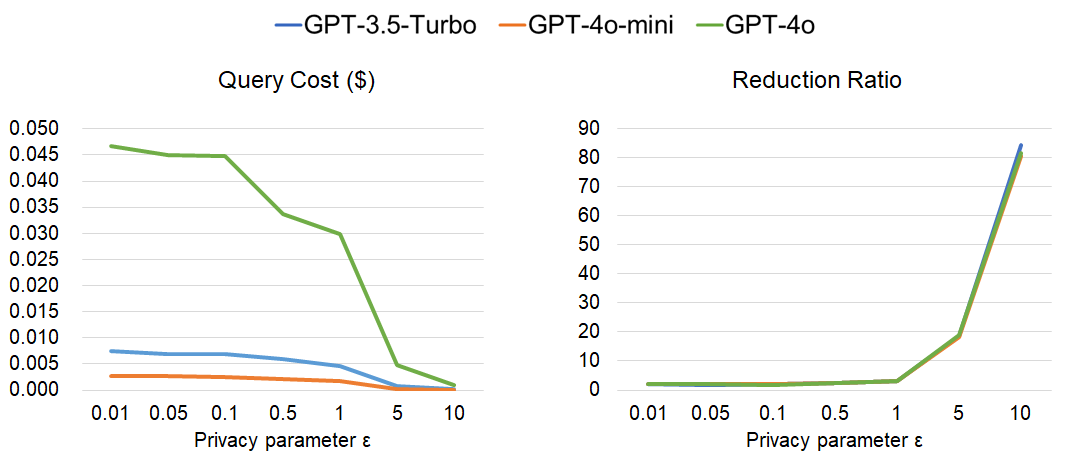}
\caption{Per-request query cost (in dollars) and reduction ratio under varying privacy parameter $\epsilon$.}
\label{fig:querycosteps}
\end{figure}

Figure \ref{fig:querycostbatch} presents the query cost under batch size $B$ ranging from $100$ to $5000$. Smaller batch size leads to lower level of reduction ratio, since the cost is distributed among fewer queries. At $B=5000$, the reduction ratio ranges from $2\times$ to $5.6\times$, whereas at $B=100$, the ratio drops to around $1.6\times$ to $2.1\times$. Prompt simplification consistently reduces the query cost across all batch configurations.

\begin{figure*}[h!]
    \centering
    \includegraphics[width=0.9\linewidth]{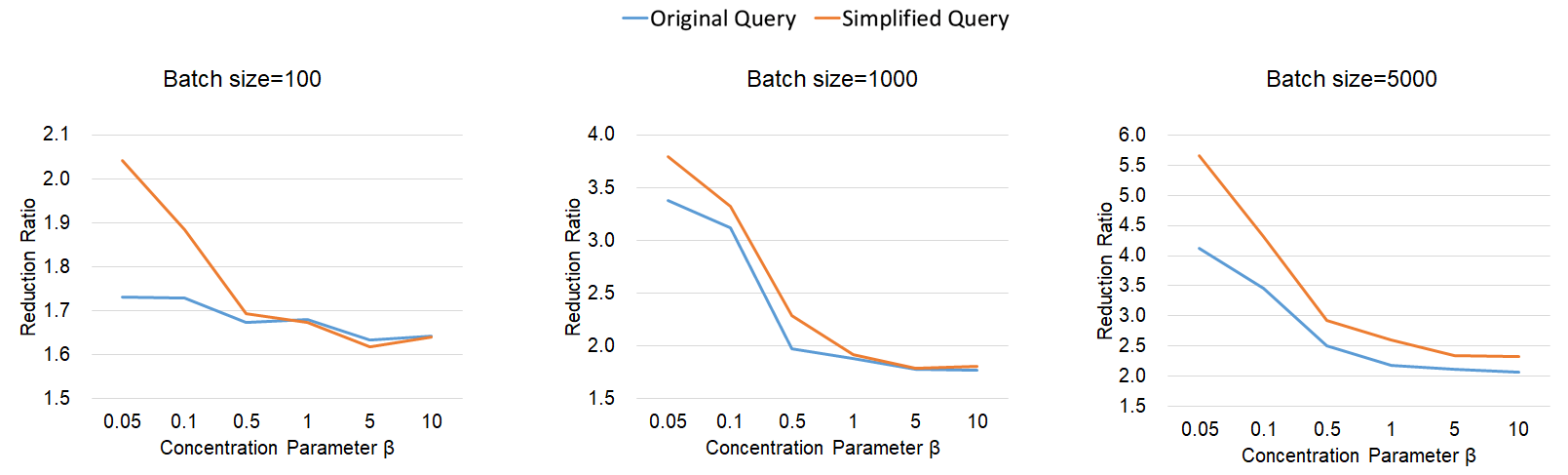}
\caption{Query cost reduction ratio under varying batch sizes $B$.}
\label{fig:querycostbatch}
\end{figure*}

\subsection{Latency Analysis} 
The latency can be decomposed into two parts: the idle waiting period used to gather prompts from multiple clients, and the active processing time required to handle the collected batch. The former component depends on the configured window time. We focus on the latter component, which comprises: (i) LLM inference time on the service provider; (ii) attribute combination filtering time for the noise sampler; (iii) prompt grouping time for the noise sampler; (iv) optional user-side query simplification time; (v) remaining overhead, mainly encryption and decryption operations. Components (ii)–(iv) represent the additional latency introduced by SharedRequest’s privacy-preserving protocol.

Figure \ref{fig:delay} breaks down active batch processing latency  under parallelized implementation. It can be observed that attribute combination filtering adds the largest overhead beyond the server-side LLM inference. While prompt simplification significantly reduces downstream processing time, it introduces an average $1.4$s of user-side overhead, illustrating a trade-off between local computation and subsequent latency. In contrast, the overheads for cryptographic operations and prompt grouping are minimal, accounting for only $0.6\%$ and $2\%$ of total latency on average, respectively.

\begin{figure}[htp]
    \centering
    \includegraphics[width=0.98\linewidth]{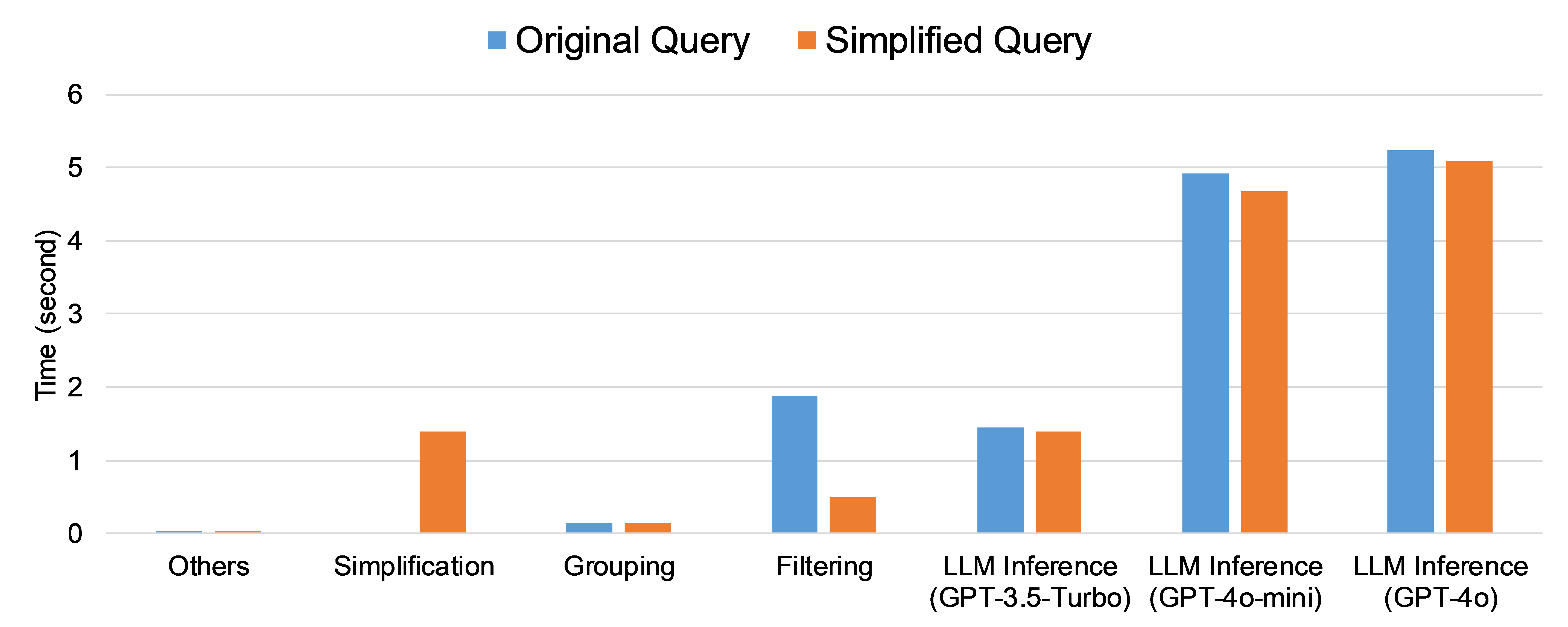}
\caption{Computation time (in seconds) per query for each process. Simplification refers to local query simplification, grouping refers to prompt grouping, and filtering refers to filtering of qualified attribute combinations.}
\label{fig:delay}
\end{figure}

\subsection{Impact of Clustering Parameter}
We study the impact of merging threshold on query costs and accuracies in Figure \ref{fig:f1thd} and \ref{fig:costthd}. As the merging threshold increases, more generic instructions are grouped into fewer clusters, which reduces both accuracy and query cost. The optimal threshold should be chosen based on the desired trade-off between utility and overhead.
\begin{figure}[htp]
    \centering
    \includegraphics[width=0.98\linewidth]{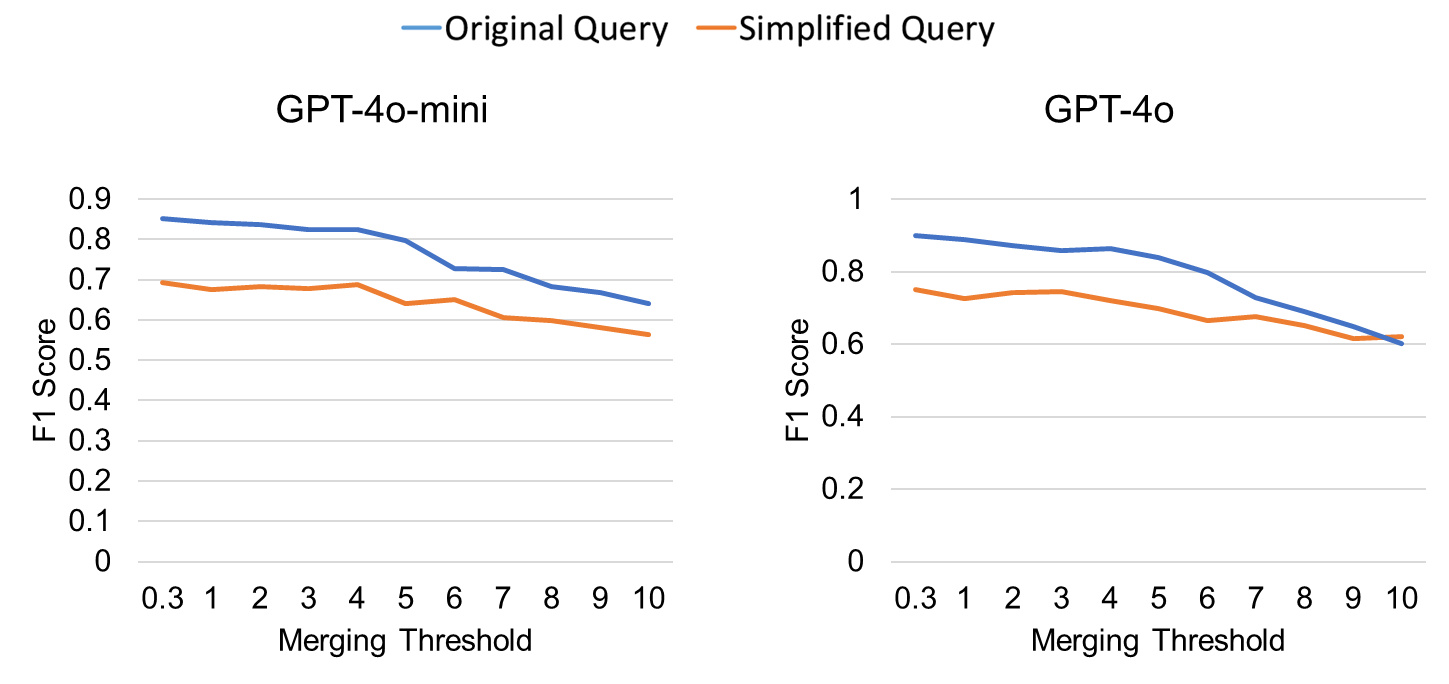}
\caption{F1 score under varying merging threshold on MMLU-Biz.}
\label{fig:f1thd}
\end{figure}

\begin{figure}[htp]
    \centering
    \includegraphics[width=0.98\linewidth]{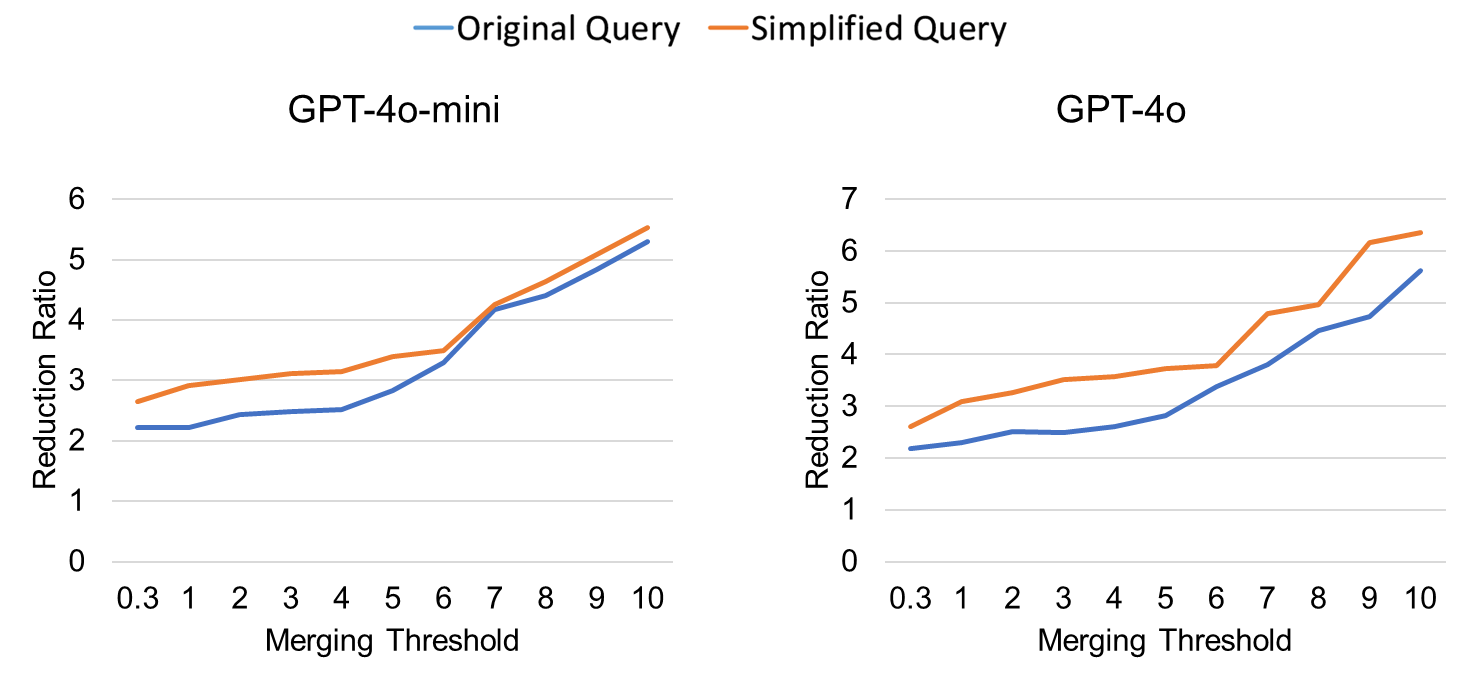}
\caption{Query cost reduction ratio under varying merging threshold on MMLU-Biz.}
\label{fig:costthd}
\end{figure}

\subsection{Overhead Comparison}
We compare the computation and communication overhead of SharedRequest with existing LLM inference frameworks. We evaluate overhead against two method types: (1) SMPC-based methods (Iron, MPCFormer, BOLT, and NEXUS) and (2) DP-based methods (RanText, CusText and DP-Prompt).

Table \ref{tab:comppriv} summarizes the resource requirements and overhead of these inference methods. SMPC-based approaches, due to their multiparty computation protocols, incur significantly higher communication overhead than both SharedRequest and DP-based methods. Moreover, despite running on smaller models (i.e., Bert-base), SMPC-based methods exhibit higher inference time compared to SharedRequest, which operates on GPT-4o. To our best knowledge, existing SMPC-based techniques have only been evaluated on models up to 13B parameters, and their scalability to larger models (e.g., 70B+) remains an open challenge. While SharedRequest has a higher runtime than perturbation-based methods, its substantial utility gains highlight its advantage.

\begin{table*}[htp]
\begin{center}
% \begin{sc}
% \scalebox{0.88}{
\begin{small}
\begin{tabular}{llcccccc}
\hline
& Framework & Model & I./O. size & Communication cost & Communication setting & Runtime \\
\hline
\multirow{4}{*}{SMPC}& Iron  & Bert-base & (128, 1) & 280 GB & (3 Gbps, 0.8 ms) & {\raise.17ex\hbox{$\scriptstyle\sim$}}475 s \\
& MPCFormer  & Bert-base & (128, 1) & 12 GB & (5 Gbps, 1 ms) & {\raise.17ex\hbox{$\scriptstyle\sim$}}55 s\\
 &BOLT  & Bert-base & (128, 1) & 25 GB & (3 Gbps, 0.8 ms) & {\raise.17ex\hbox{$\scriptstyle\sim$}}185 s \\
& NEXUS  & Bert-base & (128, 1) & 0.16 GB & (100 Mbps, 80 ms) & {\raise.17ex\hbox{$\scriptstyle\sim$}}55 s \\
\hline
\multirow{3}{*}{DP} & RanText & GPT-4o & (512, $\ast$) & $<$ 0.1 MB & (100 Mbps, 80 ms) & {\raise.17ex\hbox{$\scriptstyle\sim$}}7 s \\
& CusText & GPT-4o & (512, $\ast$) & $<$ 0.1 MB & (100 Mbps, 80 ms) & {\raise.17ex\hbox{$\scriptstyle\sim$}}6.5 s \\
& DP-Prompt & GPT-4o & (512, $\ast$) & $<$ 0.1 MB & (100 Mbps, 80 ms) & {\raise.17ex\hbox{$\scriptstyle\sim$}}7 s\\
\hline
\textbf{Ours} & \textbf{SharedRequest} & GPT-4o & (512, $\ast$) & $<$ 0.1 MB & (100 Mbps, 80 ms) & {\raise.17ex\hbox{$\scriptstyle\sim$}}10 s \\
\hline
\end{tabular}
% }
\end{small}
\end{center}
\caption{Resource requirement, communication cost, and runtime of privacy-preserving LLM inference frameworks. Values in I./O. size refers to the lengths of input and output tokens, where $\ast$ denotes an unfixed length. Communication setting specifies the network's bandwidth and latency parameters. }
\label{tab:comppriv}
\end{table*}

\section{Discussions}
\label{app:discuss}
\subsection{Private Attributes}
\label{app:privattr}
Users could identify private attributes and their alternative through:
\begin{itemize}
    \item \textbf{Named‑entity recognition (NER).} Deploy standard NER tools \cite{ehrmann2023named} to identify private information such as dates and locations.
    \item \textbf{Pre-constructed attribute database.} Maintain a user-curated mapping from private categories to keywords (e.g. {“job”: [“cybersecurity engineer”, “attorney”,...]}). Ontologies like DPV‑PD offer structured category definitions and synonym support for this purpose \cite{pandit2024data}. The dictionary can be used to match private attributes in the keyword, and the alternatives can be chosen from keywords of each category.
    \item \textbf{Local classifier.} Train a token-level or span-level model on annotated datasets or token-annotation lexicons to predict whether individual tokens or phrases are sensitive and their privacy categories. 
    \item \textbf{Mask-and-fill via online LLMs.} Replace sensitive spans in the prompt with \#MASK and request an online LLM to propose plausible substitutes. The validated outputs can be added to the alternative set for $A_{q_i}$.
\end{itemize}

To ensure the distinguishability between private and candidate attributes, we can enforce a recursive $(c,l)$-diversity constraint on the constructed candidate set to prevent. For a candidate set $S$ containing the true attribute and alternatives, let $\Pr(a|T)$ be the inferred probability of candidate attribute $a\in S$ given the generic instruction $T$. After sorting the probability distribution $p_1\geq p_2\geq \cdots \geq p_{|S|}$, we enforce:
\begin{equation}
    p_1 \leq c \sum_{i=l}^{|S|}p_i,
\end{equation}
which guarantees that attributes within the candidate set distributes more evenly.

\subsection{Collision of Generic Instruction}
In some cases, user prompts are highly specific or unique, resulting in very small clusters and limited queries for cost sharing. However, the cost for such extreme cases can be smoothed out by other common instructions on average. Both our theoretical analysis and empirical results show that the overall reduction is positively correlated with the entropy of the request distribution. In other words, the more skewed (long‑tailed) the distribution, the greater the expected amortization benefit, because a larger proportion of prompts have semantically equivalent generic instruction. 

To reduce per-prompt cost even in those extreme cases, we can adaptively adjust the batch window per generic instruction, waiting for sufficient queries before forwarding to the service provider. This dynamic batching introduces a natural latency–cost trade‑off: longer waiting times enable higher reduction ratios at the expense of increased delay.

Furthermore, the noise sampler can support users proactively by estimating the likely cluster size for a given generic instruction based on historical data. The system can warn users when the expected “collision” count is low. The user can then refine or generalize their instruction based on the sampler’s suggestions to increase cluster size.

\subsection{Query Deduplication}
\label{app:dedup}
Our framework’s cost analysis implicitly assumes that the service provider can deduplicate identical prompts at inference time and charge only once per unique evaluation. While such deduplication is technically feasible and implementable within a provider’s internal inference stack, current commercial LLM APIs might not generally perform this kind of deduplication by default. 

To address the gap between our cost model and real-world API billing, we propose incorporating an additional non-colluding deduplicator between the noise sampler and service provider:
\begin{itemize}
    \item \textbf{Encrypted Prompt Deduplication:} users encrypt their private attributes using the deduplicator’s public key. The deduplicator decrypts the private attributes to identify identical prompt for deduplication.
    \item \textbf{Service Provider Inference:} The deduplicated prompt set is sent to the service provider, which processes each unique prompt once and returns the corresponding outputs.
    \item \textbf{Response Expansion and Masking:} Upon receiving the responses, the deduplicator applies masking  and then expands the responses back to match each user’s original query set. These masked responses are then forwarded to the noise sampler while preserving privacy.
\end{itemize}
Under this design, the deduplicator’s view consists of a mixed set of original and noisy prompts that satisfies $(\mathcal{A}^n,\epsilon)$-indistinguishabilit, and by the post-processing property (see Proposition~\ref{prop:postprocess}), the service provider’s view achieves the same privacy guarantee. According to our empirical evaluation, incorporating the deduplicator incurs within 5\% additional computation cost under our experimental setting, where deduplication is parallelized across clusters. This allows the billing benefits of deduplication to be realized even under pricing models that do not natively support per-prompt deduplication, while still maintaining the privacy guarantees of our framework.

\end{document}